\newcommand{\R}{\mathbb{R}}
\newcommand{\N}{\mathbb{N}}
\newcommand{\Z}{\mathbb{Z}}
\newcommand{\mydef}{:=}
\newcommand{\domcomp}{n^3\,\text{polylog}\, n}
\DeclareMathOperator{\dist}{dist}
\DeclareMathOperator{\Int}{Int}
\DeclareMathOperator{\area}{area}
\newcommand\drop[1]{}
\newcommand{\set}[1]{\left\{ #1 \right\}}
\newcommand{\setbuilder}[2]{\set{ #1 \; \middle\vert \; #2 }}
\newtheorem*{rep@theorem}{\rep@title}
\newcommand{\newreptheorem}[2]{%
\newenvironment{rep#1}[1]{%
 \def\rep@title{#2 \ref{##1}}%
 \begin{rep@theorem}}%
 {\end{rep@theorem}}}
\newtheorem{theorem}{Theorem}
\newtheorem{lemma}[theorem]{Lemma}
\newtheorem{corollary}[theorem]{Corollary}
\theoremstyle{definition}
\newtheorem{definition}{Definition}
\theoremstyle{remark}
\newtheorem*{remark}{Remark}
\newtheorem*{claim}{Claim}
\title{Tiling with Squares and Packing Dominos in Polynomial Time}
\author{Anders Aamand\footnote{Basic Algorithms Research Copenhagen (BARC), University of Copenhagen. BARC is supported by the VILLUM Foundation grant 16582.} \and Mikkel Abrahamsen$^*$ \and Thomas D. Ahle$^*$  \and Peter M. R. Rasmussen$^*$}
\date{August 9, 2021}
\begin{document}
\maketitle
\thispagestyle{empty}
\begin{abstract}
A polyomino is a polygonal region with axis parallel edges and corners of integral coordinates, which may have holes.
In this paper, we consider planar tiling and packing problems with polyomino pieces and a polyomino container $P$.
We give two polynomial time algorithms, one for deciding if $P$ can be tiled with $k\times k$ squares for any fixed $k$ which can be part of the input (that is, deciding if $P$ is the union of a set of non-overlapping $k\times k$ squares) and one for packing $P$ with a maximum number of non-overlapping and axis-parallel $2\times 1$ dominos, allowing rotations by $90^\circ$.
As packing is more general than tiling, the latter algorithm can also be used to decide if $P$ can be tiled by $2\times 1$ dominos.

These are classical problems with important applications in VLSI design, and the related problem of finding a maximum packing of $2\times 2$ squares is known to be NP-Hard [J.~Algorithms 1990].
For our three problems there are known pseudo-polynomial time algorithms, that is, algorithms with running times polynomial in the \emph{area} or \emph{perimeter} of $P$.
However, the standard, compact way to represent a polygon is by listing the coordinates of the corners in binary.
We use this representation, and thus present the first polynomial time algorithms for the problems.
Concretely, we give a simple $O(n\log n)$ algorithm for tiling with squares, and a more involved $O(\domcomp)$ algorithm for packing and tiling with dominos, where $n$ is the number of corners of $P$.
\end{abstract}

\newpage
\pagenumbering{arabic} 

\section{Introduction}
\begin{wrapfigure}{r}{0.35\textwidth}
   \centering
   \begin{tikzpicture}[scale=.6]
       \foreach \x in {0,...,7}
       \foreach \y in {0,...,7} {
          \pgfmathparse{int(1-mod(\x+\y,2) + int(\x==0 && \y==0) + int(\x==7 && \y==7))}
          \ifnum\pgfmathresult=1 {
            \fill[pattern=north east lines, pattern color=black] (\x,\y) rectangle (\x+1, \y+1);
          } \fi
        }
        \draw (1,1) -- (1,0) -- (8,0) -- (8,7) -- (7,7) -- (7,8) -- (0,8) -- (0,1) -- (1,1); 
   \end{tikzpicture}
   \caption{The chessboard polyomino envisioned by Max Black.}
   \label{fig:chessboard}
\end{wrapfigure}
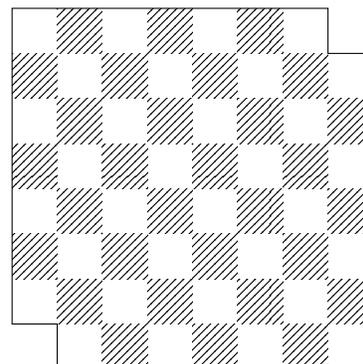

A chessboard has been mutilated by removing two diagonally opposite corners, leaving $62$ squares.
Philosopher Max Black asked in 1946 whether one can place $31$ dominoes of size $1\times 2$ so as to cover all of the remaining squares?
Tiling problems of this sort are popular in recreational mathematics, such as the mathematical olympiads\footnote{See e.g.~the ``hook problem'' of the International Mathematical Olympiad 2004.} and have been discussed by Golomb~\cite{doi:10.1080/00029890.1954.11988548} and Gamow \& Stern~\cite{gamow1958puzzle}.
The mutilated chessboard and the dominos are examples of the type of polygon called a \emph{polyomino}, which is a polygonal region of the plane with axis parallel edges and corners of integral coordinates.
We allow polyominos to have holes.

From an algorithmic point of view, it is natural to ask whether a given (large) polyomino $P$ can be \emph{tiled} by copies of another fixed (small) polyomino $Q$, which means that $P$ is the union of non-overlapping copies of $Q$ that may or may not be rotated by $90^\circ$ and $180^\circ$.
As the answer is often a boring \emph{no}, one can ask more generally for the largest number of copies of $Q$ that can be \emph{packed} into the given container $P$ without overlapping.
Algorithms answering this question (for various $Q$) turn out to have important applications in very large scale integration (VLSI) circuit technology.
As a concrete example, Hochbaum \& Maass~\cite{hochbaum1985approximation} gave the following motivation for their development of a polynomial time approximation scheme for packing $2\times 2$ squares into a given polyomino $P$ (using the area representation of $P$, to be defined later):
\begin{displayquote}
``For example, $64$K RAM chips, some of which may be defective, are available on a rectilinear grid placed on a silicon wafer.
$2 \times 2$ arrays of such nondefective chips could be wired together to produce $256$K RAM chips.
In order to maximize yield, we want to pack a maximal number of such $2 \times 2$ arrays into the array of working chips on a wafer.''
\end{displayquote}
Although the mentioned amounts of memory are small compared to those of present day technology, the basic principles behind the production of computer memory are largely unchanged, and methods for circumventing defective cells of wafers (the cells are also known as \emph{dies} in this context) is still an active area of research in semiconductor manufacturing~\cite{chien2001cutting,de2005investigation,JangWafer,melzner2007maximization}.

\vspace{.5em}
The most important result in tiling is perhaps the combinatorial group theory approach by Conway \& Lagarias~\cite{CONWAY1990183}.
Their algorithmic technique is used to decide whether a given finite region consisting of cells in a regular lattice (triangular, square, or hexagonal) can be tiled by pieces drawn from a finite set of tile shapes.
Thurston~\cite{thurston1990conway} gives a nice introduction to the technique and shows how it can be used to decide if a polyomino without holes can be tiled by dominos.
The running time is $O(a\log a)$, where $a$ is the \emph{area} of $P$.
Pak, Sheffer, \& Tassy~\cite{pak2016fast} described an algorithm with running time $O(p\log p)$, where $p$ is the \emph{perimeter} of $P$.

The problem of \emph{packing} a maximum number of dominos into a given polyomino $P$ 
was apparently first analyzed by
Berman, Leighton, \& Snyder~\cite{berman1982optimal} who observed that this problem can be reduced to finding a maximum matching of the incidence graph $G(P)$ of the cells in $P$:
There is a vertex for each $1\times 1$ cell in $P$, and two vertices are connected if the two cells share a geometrical edge.
The graph $G(P)$ is bipartite, so the problem can be solved in $O(n^{3/2})$ time using the Hopcroft–Karp algorithm, where $n$ is the number of cells (i.e., the area of~$P$).

On the flip-side, a number of hardness results have been obtained for simple tiling and packing problems:
Beauquier, Nivat, Remila, \& Robson~\cite{BEAUQUIER19951} showed that if $P$ can have holes, the problem of deciding if $P$ can be tiled by translates of two rectangles $1\times m$ and $k\times 1$ is NP-complete as soon as $\max\{m,k\}\geq 3$ and $\min\{m,k\}\geq 2$.
Pak \& Yang~\cite{PAK20131804} showed that there exists a set of at most $10^6$ rectangles such that deciding whether a given \emph{hole-free} polyomino can be tiled with translates from the set is NP-complete.
Other generalizations have even turned out be undecidable:
Berger~\cite{bergerUndecidability} proved in 1966 that deciding whether pieces from a given finite set of polyominos can tile the plane is Turing complete.
For packing, Fowler, Paterson, \& Tanimoto~\cite{fowler1981optimal} showed already in the early 80s that deciding whether a given number of $3\times 3$ squares can be packed into a polyomino (with holes) is NP-complete, and the result was strengthened to $2\times 2$ squares by
Berman, Johnson, Leighton, Shor, \& Snyder~\cite{BERMAN1990153}.

\vspace{.5em}

As it turns out, for all of the above results, it is assumed that the container $P$ is represented either as a list of the individual cells forming the interior of $P$ or as a list of the boundary cells.
We shall call these representations the \emph{area representation} and \emph{perimeter representation}, respectively.
The area and perimeter representations correspond to a unary rather than binary representation of integers and the running times of the existing algorithms are thus only pseudo-polynomial.
It is much more efficient and compact to represent $P$ by the coordinates of the corners, where the coordinates are represented as binary numbers. 
This is the way one would usually represent polygons (with holes) in computational geometry:
The corners are given in cyclic order as they appear on the boundary of $P$, one cycle for the outer boundary and one for each of the holes of $P$. We shall call such a representation a \emph{corner representation}.
With a corner representation, the area and perimeter can be exponential in the input size, so the known algorithms which rely on an area or perimeter representation to be polynomial, are in fact exponential when using this more efficient encoding of the input.
Problems that are NP-complete in the area or perimeter representation are also NP-hard in the corner representation, but NP-membership does not necessarily follow.
In our practical example of semiconductor manufacturing, the corner representation also seems to be the natural setting for the problem.
Hopefully, there are only few defective cells to be avoided when grouping the chips, so the total number of corners of the usable region is much smaller than its area.

El-Khechen, Dulieu, Iacono, \& Van Omme~\cite{el2009packing} showed that even using a corner representation for a polymino $P$, the problem of deciding if $m$ squares of size $2\times 2$ can be packed into $P$ is in NP.
That was not clear before since the naive certificate  specifies the placement of each of the $m$ squares, and so, would have exponential length.
Beyond this, we know of no other work using the corner representation for polyomino tiling or packing problems.

\paragraph{Our contribution.}



While the complexity of the problem of packing $2\times 2$ squares into a polyomino $P$ has thus been settled as NP-complete, the complexity of the tiling problem was left unsettled.
Tiling and packing are closely connected in this area of geometry, but their complexities can be drastically different.
Indeed, we show in Section~\ref{sec:tiling} that it can be decided in $O(n\log n)$ time by a surprisingly simple algorithm whether $P$ can be tiled by $k\times k$ squares for any fixed $k\in\mathbb N$ which can even be part of the input.
Here, $n$ is the number of corners of $P$.\footnote{We assume throughout the paper that we can make basic operations (additions, subtractions, comparisons) on the coordinates in $O(1)$ time.
Otherwise, the time complexities of our two algorithms will be $O(n t \log n)$ and $O(n^3t+\domcomp)$, respectively, where $t$ is the time it takes to make one such operation.}
With the area and perimeter representations, it is trivial to decide if $P$ can be tiled in polynomial time (see Section~\ref{sec:tiling}), but as noted above, using the corner representation, it is not even immediately obvious that the problem is in NP.

In Section~\ref{sec:algo}-\ref{sec:runningtime}, we provide and analyse an algorithm that can decide in $O(\domcomp)$ time if $m$ dominos (i.e., rectangles of size $1\times 2$ that can be rotated $90^\circ$) can be packed in a given polyomino $P$.
This algorithm is more complicated and we consider it our most important contribution.
The algorithm implicitly constructs a maximum packing, and the same algorithm can be used to decide if $P$ can be tiled by dominos.
In Section~\ref{sec:simpler}, we further describe a much simpler algorithm that works by truncating long edges of $P$ (using a multiple-sink multiple-source maximum flow algorithm as a black box). The simplicity of this algorithm comes at the cost of a higher running time of $O(n^4\,\text{polylog}\, n)$.
The proof that the simple algorithm works uses the same structural results as we developed for the faster but more complicated algorithm.
Table~\ref{tab:four-problems} summarises the known and new results.

\bgroup
\def\arraystretch{2}
\begin{table}[ht]
\centering
\begin{tabular}{ c | c | c }
 Shapes & Tiling & Packing \\ 
 \hline
   \begin{tikzpicture}[scale=.3, baseline=-.05em]
       \foreach \x in {0,...,2}
           \draw (\x,0) -- (\x,1); 
       \foreach \y in {0,...,1}
           \draw (0,\y) -- (2,\y); 
   \end{tikzpicture}
   \;
   \begin{tikzpicture}[scale=.3, baseline=.35em]
       \foreach \x in {0,...,1}
           \draw (\x,0) -- (\x,2); 
       \foreach \y in {0,...,2}
           \draw (0,\y) -- (1,\y); 
   \end{tikzpicture}
 & $O(\domcomp)$ [This paper] & $O(\domcomp)$ [This paper] \\  
 \hline
   \begin{tikzpicture}[scale=.3, baseline=.4em]
       \foreach \x in {0,...,2}
           \draw (\x,0) -- (\x,2); 
       \foreach \y in {0,...,2}
           \draw (0,\y) -- (2,\y); 
   \end{tikzpicture}
 & $O(n\log n)$ [This paper] & NP-complete~\cite{BERMAN1990153,el2009packing}
\end{tabular}
\caption{Complexities of the four fundamental tiling and packing problems.
Here, $n$ is the number of corners of the container $P$.
The algorithm for tiling with squares works for any size $k\times k$.}
\label{tab:four-problems}
\end{table}
\egroup

\paragraph{Further related work.}

The technique by Conway \& Lagarias~\cite{CONWAY1990183} has been adapted to obtain algorithms for tiling with other shapes than dominos: Kenyon \& Kenyon~\cite{kenyontiling} showed how to decide whether a given hole-free polyomino $P$ can be tiled with translates of the rectangles $1\times m$ and $k\times 1$ for fixed integers $m$ and $k$.
The running time is again linear in the area of $P$.
They also described an algorithm to decide if a polyomino can be tiled by the rectangles $k\times m$ and $m\times k$ with running time quadratic in the area.
R\'{e}mila~\cite{remilatiling} generalized the work by Kenyon and Kenyon and obtained a quadratic time algorithm for deciding whether a given hole-free polyomino can be tiled by translates of two fixed rectangles $k\times m$ and $k'\times m'$.
Wijshoff \& {van Leeuwen}~\cite{WIJSHOFF19841} and Beauquier \& Nivat~\cite{beauquier1991translating} gave algorithms for deciding whether a given polyomino tiles the entire plane.
For work on packing trominos, that is, polyominos consisting of three unit squares, see~\cite{horiyama2012packing}.

\subsection{Our techniques}

\paragraph{Tiling with $k\times k$ squares.}
We sort the corners of the given polyomino $P$ by the $x$-coordinates and use a vertical sweep-line $\ell$ that sweeps over $P$ from left to right.
The intuition is that the algorithm keeps track of how the tiling looks in the region of $P$ to the left of $\ell$ if a tiling exists.
As $\ell$ sweeps over $P$, we keep track of how the tiling pattern changes under $\ell$.
Each vertical edge of $P$ that $\ell$ sweeps over causes changes to the tiling, and we must update our data structure accordingly.

\paragraph{Packing with dominos.}
Our basic approach is to reduce the packing problem in the polyomino $P$ (with $n$ corners) to a maximum matching problem in a graph $G^*$ with only $O(n^3)$ vertices and edges.
We prove that a maximum matching in $G^*$ corresponds to a maximum packing of dominos in $P$.
The construction of $G^*$ requires many techniques and the correctness relies on several structural results on domino packings and technical lemmas regarding the particular way we define the intermediate polyominos and graphs that are used to eventually arrive at $G^*$.

We first find the maximum subpolyomino $P_1\subset P$ such that all corners of $P_1$ have even coordinates.
We then use a hole-elimination technique:
By carving channels in $P_1$ from the holes to the boundary, we obtain a hole-free subpolyomino $P_2\subset P_1$.
The particular way we choose the channels is important in order to ensure that the final graph $G^*$ has size only $O(n^3)$.
We now apply a technique of reducing $P$ by removing everything far from the boundary of $P_2$:
We consider the subpolyomino $Q\subset P_2$ of all cells with at least some distance $\Omega(n)$ to the boundary of $P_2$, and then we define $P_3\mydef P\setminus Q$ (note that $Q$ is removed from $P$ and not from $P_2$).
The main insight is that any packing of dominos in $P_3$ can be extended to a packing of all of $P$ that, restricted to $Q$, is a \emph{tiling}.
For this to hold, it turns out to be important that $P_2$ has no holes.

A crucial step is to prove that every cell in the polyomino $P_3$ has distance $O(n)$ to the boundary of $P_3$ and that $P_3$ has $O(n)$ corners.
There may, however, still be an exponential number of cells in $P_3$ due to long \emph{pipes} (corridors).
We then develop a technique for contracting these long pipes.
The contraction is not carried out geometrically, but in the incidence graph $G_3\mydef G(P_3)$ of the cells of $P_3$, by contracting long horizontal and vertical paths to single edges, and the resulting graph is $G^*$.

All vertices of $G^*$ correspond to cells of $P_3$ with distance at most $O(n)$ from a corner of $P_3$, and since $P_3$ has $O(n)$ corners, we get that $G^*$ has size $O(n^3)$.
We then compute a maximum matching in $G^*$ using a multiple-source multiple-sink maximum flow algorithm by Borradaile, Klein, Mozes, Nussbaum, \& Wulff{-}Nilsen~\cite{borradaile2017multiple}, which has since been improved slightly by Gawrychowski \& Karczmarz~\cite{gawrychowski2018improved}.
This results in a running time of $O(\domcomp)$.
The number of dominos in a maximum packing in the original polyomino $P$ is then the size of the maximum matching plus half of the area of everything that has been removed from $P$.

\section{Preliminaries}\label{sec:prelim}

We define a \emph{cell} to be a $1\times 1$ square of the form $[i,i+1]\times [j,j+1]$, $i,j\in \Z$.
A subset $P\subseteq \R^2$ is called a \emph{polyomino} if it is a finite union of cells.
For a polyomino $P$, we define $G(P)$ to be the graph which has the cells in $P$ as vertices and an edge between two cells if they share a (geometrical) edge. We say that $P$ is \emph{connected} if $G(P)$ is a connected graph. Figure~\ref{figure:augmenting} (a) illustrates a connected polyomino. For a simple closed curve $\gamma\subset\R^2$, we denote by $\Int \gamma$ the interior of $\gamma$. An alternative way to represent a connected polyomino is by a sequence of simple closed curves $(\gamma_0,\gamma_1,\dots,\gamma_h)$ such that (1) each of the curves follows the horizontal and vertical lines of the integral grid $\Z^2$, (2) for each $i\in \{1,\dots,h\}$, $\Int \gamma_i \subseteq \Int \gamma_0$, (3) for each distinct $i,j\in \{1,\dots,h\}$, $\Int \gamma_i \cap \Int \gamma_j=\emptyset$, and (4) for distinct $i,j\in \{0,\dots,h\}$, $\gamma_i\cap \gamma_j\subseteq \Z^2$. For a connected polyomino $P$, there exists a unique such sequence (up to permutations of $\gamma_1,\dots,\gamma_h$) with $P=\overline{\Int \gamma}\setminus (\bigcup_{i=1}^h \Int \gamma_i)$. It is standard to reduce our tiling and packing problems to corresponding tiling and packing problems for connected polyominos, so for simplicity we will assume that the input polyominos to our algorithms are connected. The \emph{corners} of a polyomino $P$ (specified by a sequence $(\gamma_0,\gamma_1,\dots,\gamma_h)$), are the corners of the curves $\gamma_0,\dots, \gamma_h$. We assume that an input polyomino with $n$ corners is represented using $O(n)$ words of memory by describing the corners of each of the curves $\gamma_0,\dots\gamma_h$ in cyclic order.

\begin{figure}[ht]
\centering
\includegraphics[page=5]{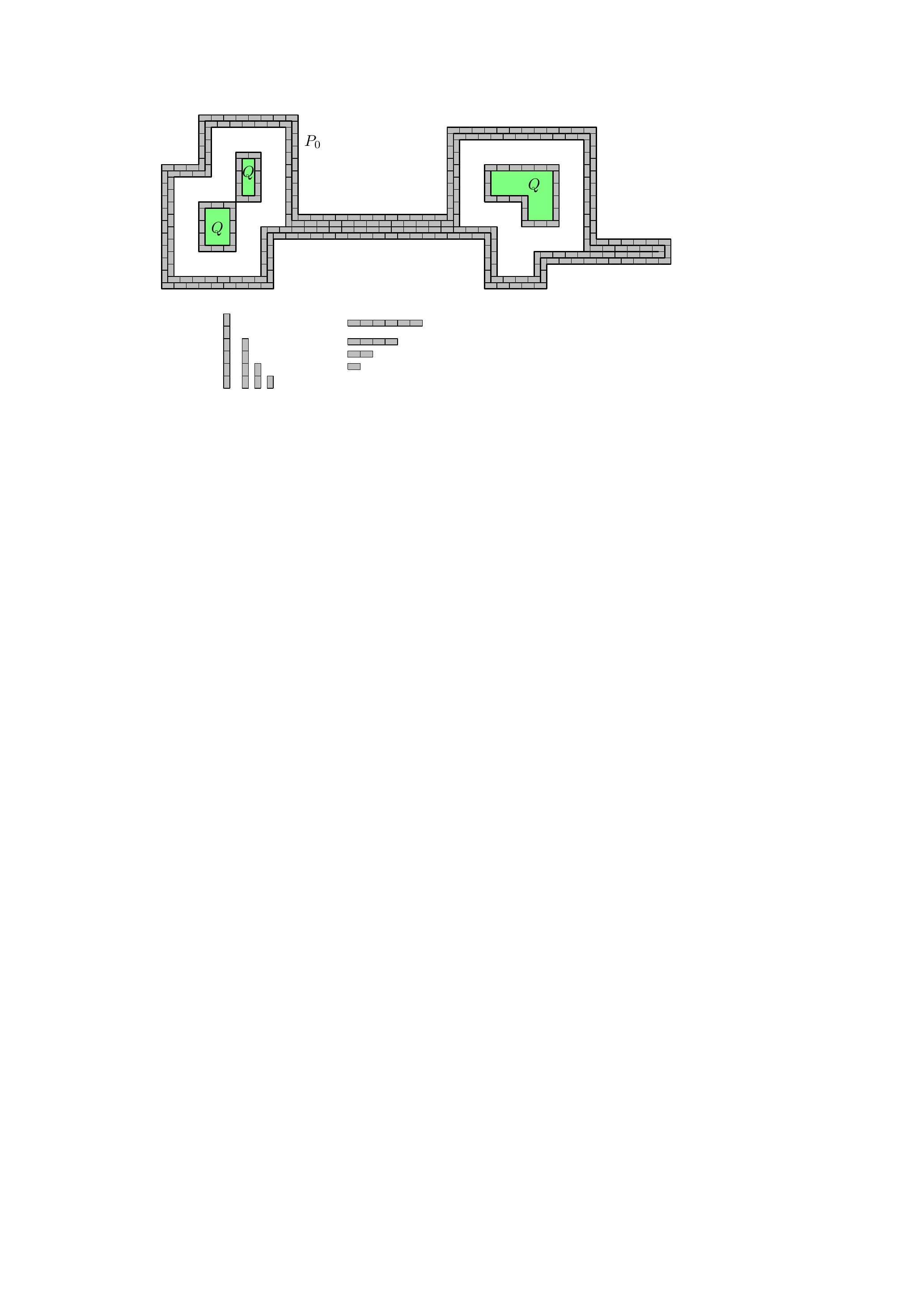}
\caption{(a) A polyomino with two holes. (b) Extending a domino packing using an augmenting path in $G(P)$.}
\label{figure:augmenting}
\end{figure}

In this paper we will exclusively work with the $L_\infty$-norm when measuring distances. For two points $a,b \in \R^2$ we define $\dist(a,b)=\|a-b\|_\infty$. For two subsets $A,B\subseteq \R^2$ we define 
$$
\dist(A,B)=\inf_{(a,b)\in A \times B}\dist(A,B).
$$
In our analysis, $A$ and $B$ will always be closed and bounded (they will in fact be polynomios), and then the $\inf$ can be replaced by a $\min$. Finally, we need the notion of the \emph{offset} $B(A,r)$ of a set $A\subseteq \R^2$ by a value $r\in\R$.
If $r\geq 0$, we define
$$
B(A,r)\mydef \setbuilder{x \in \R^2}{\dist(x,A)\leq r},
$$
and otherwise, we define $B(A,r)\mydef B(A^c,-r)^c$.
Note that if $r\geq 0$, we have $A\subset B(A,r)$ and otherwise, we have $B(A,r)\subset A$.

Note that a domino packing of $P$ naturally corresponds to a matching of $G(P)$ and we will often take this viewpoint. We therefore require some basic matching terminology and a result on how to extend matchings. Let $G$ be a graph and $M$ a matching of $G$. A path $(v_1,\dots,v_{2k})$ of $G$ is said to be an \emph{augmenting path} if $v_1$ and $v_{2k}$ are unmatched in $M$ and for each $1\leq i\leq k-1$, $v_{2i}$ and $v_{{2i}+1}$ are matched to each other in $M$. Modifying $M$ restricted to $\{v_1,\dots,v_{2k}\}$ by instead matching $(v_{2i-1},v_{2i})$ for $1\leq i \leq k$, we obtain a larger matching which now includes the two vertices $v_1$ and $v_{2k}$. See Figure~\ref{figure:augmenting} (b) for an illustration in the context of domino packings. We require the following basic result by Berge which guarantees that any non-maximum matching of $G$ can always be extending to a larger matching using an augmenting path as above.

\begin{lemma}[Berge]\label{lemma:berge}
Let $G$ be a graph and $M$ a matching of $G$ which is not maximum. Then there exists an augmenting path between two unmatched vertices $G$.
\end{lemma}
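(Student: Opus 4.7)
The plan is to prove Berge's lemma by the classical symmetric-difference argument. First I would fix any matching $M^*$ of $G$ with $|M^*| > |M|$, which exists because $M$ is assumed not to be maximum, and form the edge set $H \mydef M \triangle M^*$, viewed as a subgraph of $G$ on the original vertex set. Since every vertex is incident to at most one edge of $M$ and at most one edge of $M^*$, every vertex has degree at most $2$ in $H$, so $H$ is a vertex-disjoint union of simple paths and simple cycles. Along each such component the edges alternate between $M$-edges and $M^*$-edges, as two edges of the same matching cannot share an endpoint.

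Next I would run a counting argument on $H$. Every cycle alternates and hence has even length containing equally many $M$- and $M^*$-edges, so cycles contribute zero to the quantity $|M^*| - |M|$. Since this quantity is strictly positive, at least one path component $P$ of $H$ must contain strictly more $M^*$-edges than $M$-edges. By the alternation property, such a $P$ begins and ends with an $M^*$-edge, so it has even length $2\ell$ and can be written as $(v_1, v_2, \dots, v_{2\ell})$ with $\{v_{2i-1}, v_{2i}\} \in M^* \setminus M$ for $1 \le i \le \ell$ and $\{v_{2i}, v_{2i+1}\} \in M \setminus M^*$ for $1 \le i \le \ell-1$.

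Finally I would check that $P$ satisfies the paper's definition of an augmenting path for $M$. The interior pairs $(v_{2i}, v_{2i+1})$ are matched in $M$ by construction, so only the endpoint condition remains. If $v_1$ were $M$-matched to some vertex $u$, then the edge $\{v_1, u\}$ would lie in $H$: otherwise $\{v_1, u\}$ would be in both $M$ and $M^*$, forcing $u$ to be the (unique) $M^*$-partner of $v_1$, namely $v_2$, which would give $\{v_1,v_2\} \in M$ and contradict $\{v_1, v_2\} \in M^* \setminus M$. But then $v_1$ would have degree at least $2$ in $H$, contradicting the fact that $v_1$ is an endpoint of the component $P$. The same argument applies to $v_{2\ell}$, so both endpoints are unmatched in $M$, as required.

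The proof is short and entirely standard, so I do not anticipate any genuine obstacle. The only care needed is the bookkeeping of edge memberships in $M$, $M^*$, and $H$, and verifying that a path component with more $M^*$-edges than $M$-edges must start and end with $M^*$-edges, which is immediate from alternation.
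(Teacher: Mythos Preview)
The paper does not actually prove this lemma; it merely states it and attributes it to Berge as a well-known result. Your argument is the standard symmetric-difference proof and is correct as written, including the endpoint verification.
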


\section{Tiling with squares}\label{sec:tiling}

\paragraph{Naive algorithm.}
The naive algorithm to decide if $P$ can be tiled with $k\times k$ tiles works as follows.
Consider any convex corner $c$ of $P$.
A $k\times k$ square $S$ must be placed with a corner at $c$.
If $S$ is not contained in $P$, we conclude that $P$ cannot be tiled with $k\times k$ squares.
Otherwise, we recurse of the uncovered part $P\setminus S$.
When nothing is left, we conclude that $P$ can be tiled.
This algorithm runs in time polynomial in the area of $P$ and also shows that if $P$ can be tiled, there is a unique way to do it.

\paragraph{Sweep line algorithm.}
For the ease of presentation, we focus on the case of deciding tileability using $2\times 2$ squares.
It is straightforward to adapt the algorithm to decide tileability by $k\times k$ squares for any fixed $k\in\mathbb N$, as explained in the end of this section.

Our algorithm for deciding if a given polyomino $P$ can be tiled with $2\times 2$ squares uses a vertical sweep line that sweeps over $P$ from left to right.
The intuition is that the algorithm keeps track of how the tiling looks in the region of $P$ to the left of $\ell$ if a tiling exists.
As $\ell$ sweeps over $P$, we keep track of how the tiling pattern changes under $\ell$.
Each vertical edge of $P$ that $\ell$ sweeps over causes changes to the tiling, and we must update our data structures accordingly.

Recall that if $P$ is tileable, then the tiling is unique.
We define $T(P)\subset P$ to be the union of the boundaries of the tiles in the tiling of $P$, i.e., such that $P\setminus T(P)$ is a set of open $2\times 2$ squares.
If $P$ is not tileable, we define $T(P)\mydef\bot$.

Consider the situation where the sweep line is some vertical line $\ell$ with integral $x$-coordinate $x(\ell)$.
The algorithm stores a set $\mathcal I$ of pairwise interior-disjoint closed intervals $\mathcal I=I_1,\ldots,I_m\subset \R$, ordered from below and up.
Each interval $I_i$ has endpoints at integers and represents the segment $I'_i\mydef \{x(\ell)\}\times I_i$ on $\ell$.
In the simple case that no vertical edge of $P$ has $x$-coordinate $x(\ell)$ (so that no change to the set $P\cap \ell$ happens at this point), the intervals $\mathcal I$ together represent the part of $\ell$ in $P$, i.e., we have
$P\cap \ell=\bigcup_{i\in[m]} I'_i$.
If one or more vertical edges of $P$ have $x$-coordinate $x(\ell)$, then $P\cap \ell$ changes at this point and the intervals $\mathcal I$ must be updated accordingly.

For each interval $I_i$ we store a \emph{parity} $p(I_i)\in\{0,1\}$, which encodes how the tiling must be at $I'_i$ if $P$ is tileable.
To make this precise, we state the following \emph{parity invariant} of the algorithm under the assumption that $P$ is tileable; see also Figure~\ref{figure:rectanglepartition}.

\begin{figure}[ht]
\centering
\includegraphics[page=1]{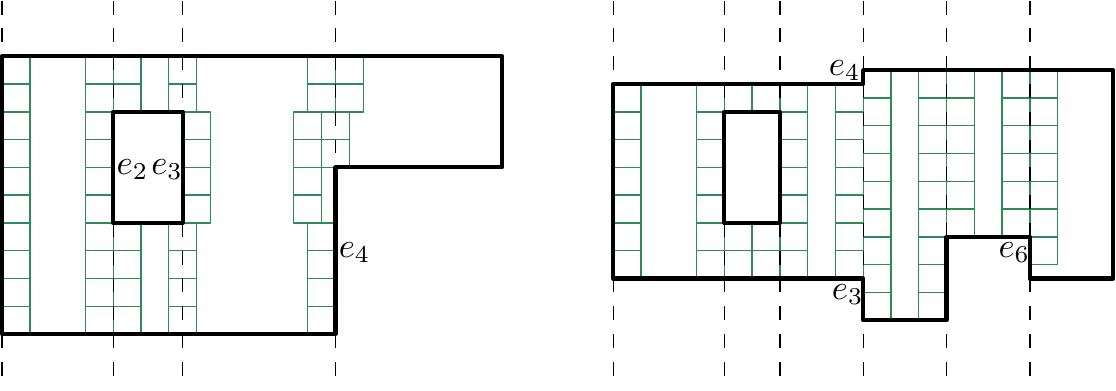}
\caption{Two instances that cannot be tiled. Left: The edge $e_2$ splits the only interval in $\mathcal I$ into two smaller intervals.
Then $e_3$ introduces a new interval with a different parity than the existing two. The edge $e_4$ makes the algorithm conclude that $P$ cannot be tiled since $e_4$ overlaps an interval with the wrong parity.
Right: The edges $e_3$ and $e_4$ introduce new intervals that are merged with the existing one. Edge $e_6$ introduces an interval which is merged with the existing interval and the result has odd length, so the algorithm concludes that $P$ cannot be tiled.}
\label{figure:rectanglepartition}
\end{figure}

\begin{itemize}
\item
If $p(I_i)$ and $x(\ell)$ have the same parity, then $I'_i\subset T(P)$, i.e., $I'_i$ follows the boundaries of some tiles and does not pass through the middle of any tile.

\item
Otherwise, $I'_i\cap T(P)$ consists of isolated points, i.e., $I'_i$ passes through the middle of some of the tiles and does not follow the boundary of any tile.
\end{itemize}

We say that two neighboring intervals $I_i,I_{i+1}$ of $\mathcal I$ are \emph{true} neighbors if $I_i$ and $I_{i+1}$ share an endpoint.
In addition to the parity invariant, we require $\mathcal I$ to satisfy the following \emph{neighbor invariant}: Any pair of true neighbors of $\mathcal I$ have different parity.

The pseudocode of the algorithm is shown in Algorithm~\ref{ALG1}.
Initially, we sort all vertical edges after their $x$-coordinates and break ties arbitrarily.
We then run through the edges in this order.
Each edge makes a change to the set $P\cap \ell$, and we need to update the intervals $\mathcal I$ accordingly so that the parity and the neighbor invariants are satisfied after each edge has been handled.
Figure~\ref{figure:rectanglepartition} shows the various events.

Consider the event that the sweep line $\ell$ reaches a vertical edge $e_j=\{x\}\times [y_0,y_1]$.
If the interior of $P$ is to the left of $e_j$, then $P\cap \ell$ shrinks.
Each interval $I_i\in\mathcal I$ that overlaps $[y_0,y_1]$ must then also shrink, be split into two, or disappear from $\mathcal I$.
This is handled by the for-loop at line~\ref{alg:for2}.
If the parity of one of these intervals $I_i$ does not agree with the parity of $e_j$, we get from the parity invariant that $P$ cannot be tiled, and hence the algorithm returns ``no tiling'' at line~\ref{alg:badoverlap}.

If on the other hand the interior of $P$ is to the right of $e_j$, then $P\cap \ell$ expands and a new interval $I$ must be added to $\mathcal I$.
This is handled by the else-part at line~\ref{alg:else}.
The new interval $I$ may have one or two true neighbors in $\mathcal I$.
If one or two such neighbors also have the same parity as $I$, we merge these intervals into one interval of $\mathcal I$.
This ensures that the neighbor invariant is satisfied after $e_j$ has been handled.

In line~\ref{alg:ifmove}, we consider the case that we finished handling all vertical edges at some specific $x$-coordinate so that the sweep line will move to the right in order to handle the next edge $e_{j+1}$ in the next iteration.
If there is an interval $I_i$ of odd length in $\mathcal I$, it follows from the parity invariant together with the neighbor invariant that $P$ cannot be tiled, so the algorithm returns ``no tiling'' at line~\ref{alg:oddlength}.

\begin{algorithm}[ht]
\LinesNumbered
\DontPrintSemicolon
\SetArgSty{}
\SetKwIF{If}{ElseIf}{Else}{if}{}{else if}{else}{end if}
\SetKwFor{While}{while}{}{end while}
Let $e_1,\ldots,e_k$ be the vertical edges of $P$ in sorted order.\;
\For{$j=1,\ldots,k$}{
Let $[y_0,y_1]$ be the interval of $y$-coordinates of $e_j$.\;
\uIf{the interior of $P$ is to the left of $e_j$}{
\For{each $I_i\in\mathcal I$ that overlaps $[y_0,y_1]$}
{\label{alg:for2}
\If{$I_i$ and $x(e_j)$ have different parity}
{ \label{alg:ifmove0}
\Return {``no tiling''}\; \label{alg:badoverlap}
}
Remove $I_i$ from $\mathcal I$, let
$J\mydef \overline{I_i\setminus [y_0,y_1]}$, and if $J\neq\emptyset$, add the interval(s) in $J$ to $\mathcal I$.
}
}\Else{\label{alg:else}
Make a new interval $I\mydef [y_0,y_1]$ with the parity $p(I)\mydef x(e_j)\text{ mod } 2$ and add $I$ to $\mathcal I$.\; \label{alg:defP}
\If {$I$ has one or two true neighbors in $\mathcal I$ that also have the same parity as $I$}
{
Merge those intervals in $\mathcal I$.\; \label{alg:merge}
}
}
\If{$j<k$ and $x(e_{j+1})>x(e_j)$ and some $I_i\in\mathcal I$ has odd length} { \label{alg:ifmove}
\Return {``no tiling''}\; \label{alg:oddlength}
}
}
\Return {``tileable''}\;
\caption{}
\label{ALG1}
\end{algorithm}

The above explanation of the algorithm argues that if the invariants hold before edge $e_j$ is handled, they also hold after.
It remains to argue that they also hold before the next edge $e_{j+1}$ is handled in the case that the sweep line $\ell$ jumps to the right in order to sweep over $e_{j+1}$.
In the open strip between the vertical lines containing $e_j$ and $e_{j+1}$, there are no vertical segments of $P$.
Hence, the pattern of the tiling $T(P)$ must continue as described by the parities $p(I_i)$ in between the edges $e_j$ and $e_{j+1}$, so the parity invariant also holds before $e_{j+1}$ is handled.

We already argued that if the algorithm returns ``no tiling'', then $P$ is not tileable.
Suppose on the other hand that the algorithm returns ``tileable''.
In order to prove that $P$ can then be tiled, we define for each $j\in[k]$ a polyomino $P_j\subset P$.
We consider the situation where the sweep line $\ell$ contains $e_j$ and $e_j$ has just been handled by the algorithm.
We then define $P_j$ to be the union of
\begin{itemize}
\item the part of $P$ to the left of $\ell$, and
\item the rectangle $[x(\ell),x(\ell)+1]\times I_i$ for each $I_i\in\mathcal I$ with a different parity than $x(\ell)$.
\end{itemize}
We first see that for each $j\in[k]$, we have $P_j\subset P$.
To this end, we just have to check that the rectangles $[x(\ell),x(\ell)+1]\times I_i$ are in $P$.
If one such interval was not in $P$, there would be an edge of $P$ overlapping the segment $\{x(\ell)\}\times I_i$.
Since $I_i$ has a different parity than $x(\ell)$, this would make the algorithm report ``no tiling'' at line~\ref{alg:badoverlap}, contrary to our assumption.

We now prove by induction on $j$ that each $P_j$ can be tiled.
Since $P=P_k$, this is sufficient.
Along the way, we will also establish that $P_1\subset P_2\subset\ldots\subset P_k$.
When $j=1$, we see that $P_j$ is empty, so the statement is trivial.
Suppose now that $P_j$ can be tiled and consider $P_{j+1}$.
Note that if $x(e_j)=x(e_{j+1})$, so that $\ell$ does not move, then $P_j=P_{j+1}$, since all intervals that are created or modified when handling $e_{j+1}$ have the same parity as $x(\ell)$, so in this case, $P_{j+1}$ is tileable because $P_j$ is.

\begin{figure}[ht]
\centering
\includegraphics[page=2]{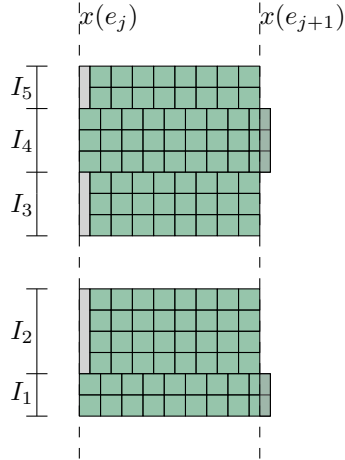}
\caption{The polyomino $P_j$ is the part of $P$ to the left of the line $x=x(e_j)$ (this part of $P_j$ is not shown) plus the grey rectangles along the line.
Here, the difference $x(e_{j+1})-x(e_j)$ is odd.
The difference $\overline{P_{j+1}\setminus P_j}$ has been tiled with green $2\times 2$ squares.}
\label{figure:tileddiff}
\end{figure}

Consider now the case $x(e_j)<x(e_{j+1})$.
Note that as $P_j\subset P$ and $P_j$ is to the left of the vertical line $x=x(e_j)+1$, we have $P_j\subset P_{j+1}$.
We now consider the set $\overline{P_{j+1}\setminus P_j}$ and argue that it is tileable; see Figure~\ref{figure:tileddiff}.
Let $I_1,\ldots,I_m$ be the intervals in $\mathcal I$ after $e_j$ was handled.
For each $I_i$, we add a rectangle $X\times I_i$ to $P_j$ in order to obtain $P_{j+1}$, where $X\subset \R$ is an interval with lower endpoint $x(e_j)$ or $x(e_j)+1$ and upper endpoint $x(e_{j+1})$ or $x(e_{j+1})+1$, and by the definition of $P_j$ and $P_{j+1}$, it follows that $X$ has even length.
Since each $I_i$ also has even length (otherwise, the algorithm would have returned ``no tiling'' at line~\ref{alg:oddlength} when $e_j$ was handled), the difference $\overline{P_{j+1}\setminus P_j}$ is a union of rectangles with even edge lengths, so $P_{j+1}$ is tileable since $P_j$ is.

\paragraph{Runtime analysis.}
Assuming that we can compare two coordinates in $O(1)$ time, we sort the vertical edges by their $x$-coordinates in $O(n\log n)$ time.
Since the intervals of $\mathcal I$ are pairwise interior-disjoint, we can implement $\mathcal I$ as a balanced binary search tree, where each leaf stores an interval~$I_i$.

We now argue that each vertical edge $e_j$, with $y$-coordinates $[y_0,y_1]$, takes only $O(\log n)$ time to handle, since we need to make only $O(1)$ updates to $\mathcal I$.
If the interior of $P$ is to the left of $e_j$, then $[y_0,y_1]\subset\bigcup_{i\in [m]} I_i$.
It then follows from the neighbor invariant that if $[y_0,y_1]$ overlaps more than one interval $I_i$, then the algorithm will return ``no tiling''.
We therefore do at most $O(1)$ updates to $\mathcal I$, so it takes $O(\log n)$ time to handle $e_j$.

On the other hand, if the interior of $P$ is to the right of $e_j$, we need to insert a new interval into $\mathcal I$ and possibly merge it with one or two neighbors in $\mathcal I$, so this also amounts to $O(1)$ changes to $\mathcal I$. 

At line~\ref{alg:ifmove}, we need to check the $O(1)$ intervals that were added or changed due to the edge $e_j$, so this can be done in $O(1)$ time.
Hence, the algorithm has runtime $O(n\log n)$.

\paragraph{Adaptation to $k\times k$ squares.}
In order to adapt the algorithm to $k\times k$ squares, we need to compare coordinates modulo $k$ instead of modulo $2$.
Specifically, each interval in $\mathcal I$ stores a number $p(I)\in\{0,1,\ldots,k-1\}$, which is set to $x(e_j)\text{ mod } k$ at line~\ref{alg:defP}.
We fail at line~\ref{alg:ifmove0} if $x(e_j)\text{ mod }k\neq p(I_i)$ and at line \ref{alg:ifmove} if some $I_i$ has a length not divisible by $k$.
At line~\ref{alg:merge}, we merge $I$ with the true neighbours that have the same $p$-value.
With these modifications, all arguments carry over to the case of $k\times k$ squares.

\section{Packing dominos}
In this section we will present our polynomial time algorithm for finding the maximum number of $1\times 2$ dominos that can be packed in a polyomino $P$. We assume that the dominos must be placed with axis parallel edges, but they can be rotated by $90^\circ$.
In any such packing, we can assume the pieces to have integral coordinates: if they do not, we can translate the pieces as far down and to the left as possible, and the corners will arrive at positions with integral coordinates.
We first describe a naive algorithm  which runs in polynomial time in the area of the polyomino.

\paragraph{Naive algorithm.} The naive algorithm considers the graph $G(P)=(V,E)$ where $V$ is the set of cells of $P$ and $e=(u,v)\in E$ if and only if the two cells $u$ and $v$ have a (geometrical) edge in common. The maximum number of $1\times 2$ dominos that can be packed in $P$ is exactly the size of a maximum matching of $G$ and it is well known that such a maximum matching can be found in polynomial time in $|V|$, i.e., in the area of $P$.
\\ \\
Our goal is to find an algorithm running in polynomial time, even with the compact representation of $P$ described in Section~\ref{sec:prelim}. In essence, we take the graph $G=G(P)$ above and construct from it a smaller graph, $G^*$, with $O(n^3)$ vertices. A maximum matching of $G^*$ yields an (implicit) description of a maximum matching of $G$ and we show that the maximum matching of $G^*$ can be found in time $O(\domcomp)$.

\subsection{Polynomial-time algorithm}\label{sec:algo}
We will next describe the steps of our algorithm for finding the maximum domino packing of a polyomino $P$. We first introduce the notion of a \emph{pipe} (see Figure~\ref{figure:pipedef}) and \emph{consistent parity}.
\begin{definition}\label{def:pipe}
Let $P$ and $Q$ be polyominos with $Q\subset P$. We say that $Q$ is a \emph{pipe} of $P$ if $Q$ is rectangular and both vertical edges of $Q$ or both horizontal edges of $Q$ are contained in edges of $P$. The \emph{width} of the pipe is the distance between this pair of edges. The \emph{length} of the pipe is the distance between the other pair of edges.
We say that a pipe is \emph{long} if its length is at least $3$ times its width.
\end{definition}

\begin{figure}[ht]
\centering
\includegraphics{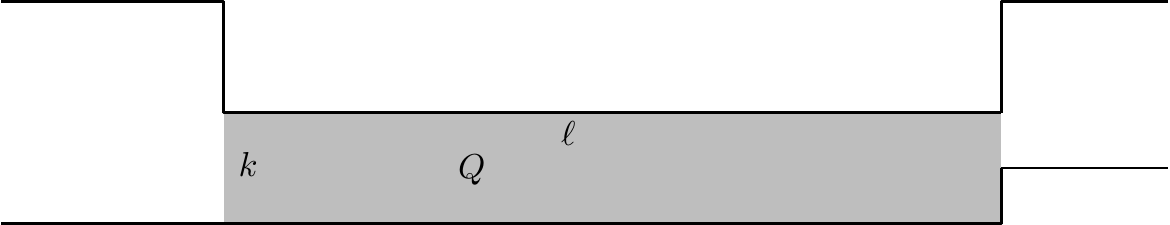}
\caption{A pipe of width $k$ and length $\ell$.}
\label{figure:pipedef}
\end{figure}

\begin{definition}\label{def:consistentparity}
We say that a polyomino $P$ has \emph{consistent parity} if all first coordinates of the corners of $P$ have the same parity and vice versa for the second coordinates.
Equivalently, $P$ has consistent parity if there exists an open $2\times 2$ square, $S$, such that for all choices of integers $i,j$ and $S'=S+(2i,2j)$, either $S'\subseteq P$ or $S'\cap P =\emptyset$.
\end{definition}


Next we present the steps of the algorithm. Figures~\ref{figure:algsteps12}--\ref{figure:algstep5} demonstrates the steps on a concrete polyomino $P$.

\begin{figure}[ht]
\centering
\includegraphics[page=4]{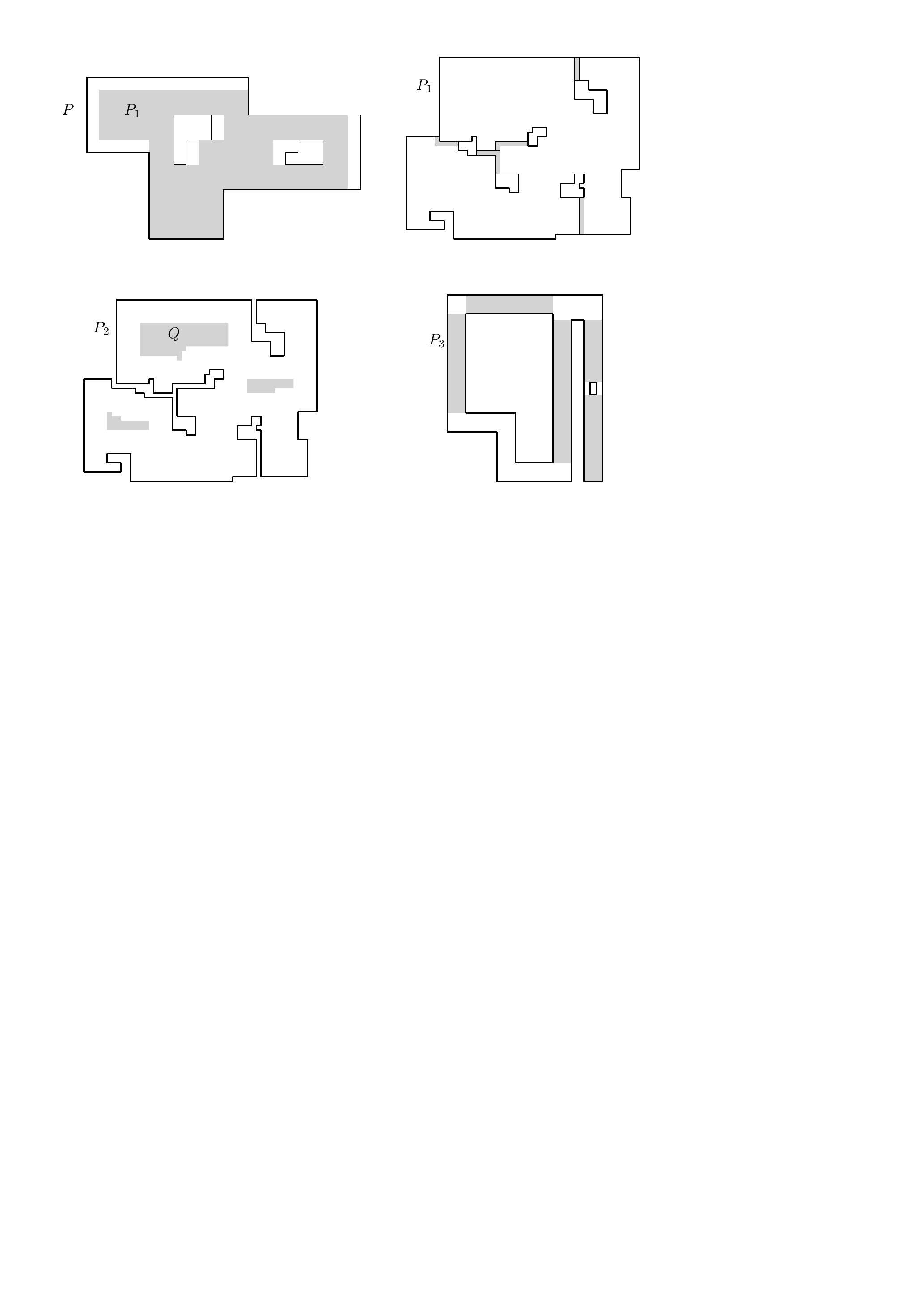}
\caption{Steps 1 and 2 of the algorithm.
Top left: Step 1, where the part $P\setminus P_1$, that is excluded from $P_1$ in order to make all coordinates even, is shown in red.
Top right: Step 2, where the holes of a polyomino $P_1$ are connected to the outer boundary by the grey channels.
Bottom: Closeup of the region in the dashed rectangle.}
\label{figure:algsteps12}
\end{figure}

\newcommand{\stepone}{Step 1: Compute the unique maximal polyomino $P_1\subset P$ with all coordinates even.}
\newcommand{\steptwo}{Step 2: Compute a polyomino $P_2\subset P_1$ with no holes and consistent parity by carving channels in $P_1$.}
\newcommand{\stepthree}{Step 3: Compute the offset $Q\mydef B(P_2,-\lfloor 3n/2 \rfloor)$ and then $P_3\mydef P\setminus Q$.}
\newcommand{\stepfour}{Step 4: Find the long pipes of $P_3$.}
\newcommand{\stepfive}{Step 5: Shorten the pipes and compute the associated graph $G^*$.}
\newcommand{\stepsix}{Step 6: Find the size of a maximum domino packing of $P$.}

\paragraph{\stepone}
We define $P_1$ to be the union of all $2\times 2$ squares $S$ of the form $S=[2i,2i+2]\times [2j,2j+2]$ with $i,j\in \Z$ and $S\subseteq P$. See the upper left and bottom part of Figure~\ref{figure:algsteps12}. 
It is readily checked that $P_1$ has at most $n$ corners. As we will see, $P_1$ can be computed in time $O(n\log n)$.

\paragraph{\steptwo}
Define $P'_0\mydef P_1$.
For $i=0,1,\ldots$, we do the following.
If there are holes in $P'_i$, we find a set of minimum size of $2\times 2$ squares $S_1,\ldots,S_k$ contained in $P'_i$ and with even corner coordinates that connects an edge of a hole to an edge of the outer boundary of $P'_i$.
To be precise, an edge of $S_1$ should be contained in the boundary of a hole of $P'_i$, an edge of $S_k$ should be contained in the outer boundary of $P'_i$, and for each $j\in \{1,\ldots,k-1\}$, $S_j$ and $S_{j+1}$ should share an edge.
We choose these squares such that they together form a $2\times 2k$ or $2k\times 2$ rectangle or an \textsf{L}-shape, which is clearly always possible.
We then define the polyomino $P'_{i+1}\mydef P'_i\setminus\bigcup_{j=1}^k S_j$, which has less holes than $P'_i$.
We stop when there are no more holes and define $P_2\mydef P'_i$ to be the resulting hole-free polyomino.
Note that in iterations $i\geq 1$, the holes may get connected to holes that were eliminated in earlier iterations or to channels carved in earlier iterations. See the upper right part of Figure~\ref{figure:algsteps12}. 
We will later see that $P_2$ has strictly less than $3n$ corners and that it can be computed in time $O(n^3)$.

\begin{figure}[ht]
\centering
\includegraphics[page=5]{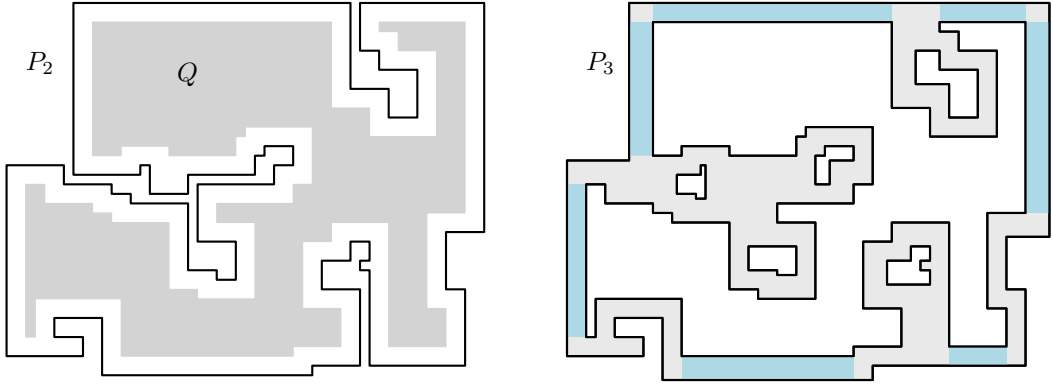}
\caption{Steps 3 and 4 of the algorithm, performed on the instance from Figure~\ref{figure:algsteps12}.
Left: Step 3, where the grey region $Q$ is an offset of the hole-free polyomino $P_2$.
In this example, $Q$ is connected, but that is in general not the case.
For pedagogical reasons, we offset by a smaller value than the algorithm would actually use.
Right: Step 4, where the grey and blue areas are $P_3\mydef P\setminus Q$.
The blue rectangles show the seven long pipes.}
\label{figure:algsteps34}
\end{figure}

\paragraph{\stepthree}
See the left part of Figure~\ref{figure:algsteps34}.
Note that we remove $Q$ from the original polyomino $P$ in order to get $P_3$, and not from $P_2$.
It is easy to check that $Q$ has at most $3n$ corners and consistent parity.
Hence $P_3\mydef P\setminus Q$ has at most $4n$ corners and, as we will see, $P_3$ has the property that for any $x\in P_3$, we have $\dist(x,\partial P_3)=O(n)$.
We will show how this step can be carried out in time $O(n \log n)$.

\paragraph{\stepfour}
Find all maximal long pipes $T_1,\dots,T_r$ in $P_3$ (recall that a pipe is long if its length is at least $3$ times its width). 
See the right part of Figure~\ref{figure:algsteps34}.
As we will see, there are at most $O(n)$ such pipes,  they are disjoint, and they each have width $O(n)$.
Later we will show how the pipes can be found in time $O(n \log n)$.

\paragraph{\stepfive}
Define $G_3\mydef G(P_3)$. We modify $G_3$ by performing the following shortening step for each $1\leq i \leq r$; see Figure~\ref{figure:algstep5}.
Assume with no loss of generality that the pipe $T_i$ is of the form $T_i=[0,\ell]\times [0,k]$ where $\ell$ is the length and $k\leq \ell/3$ is the width. If $\ell\leq 6$, we do nothing.
Otherwise, for each $j\in \{0,\dots,k-1\}$, we let  $S_j=[k+2,r] \times [j,j+1]$, where $r\mydef 2\lceil \ell/2 \rceil-k-2$, so that $G(S_j)$ is a horizontal path in $G_3$ consisting of an even number of vertices.
For each $j\in \{1,\dots,k-1\}$, we proceed by deleting the vertices of $S_j$ and their incident edges from $G_3$, and instead, we add an edge from the cell $[k+1,k+2]\times [j,j+1]$ to the cell $[r,r+1]\times [j,j+1]$ (i.e., we connect the cells to the left and right of $S_j$ with each other).

We denote the graph obtained after iterating over all $i$ by $G^*$.
Note that in $G^*$, there are only $O(k^2)=O(n^2)$ vertices corresponding to cells in each pipe $T_i$, since each pipe has width $k=O(n)$.
We show below that $G^*$ has $O(n^3)$ vertices and can be computed in time $O(n^3)$.
\begin{figure}[ht]
\centering
\includegraphics[page=2]{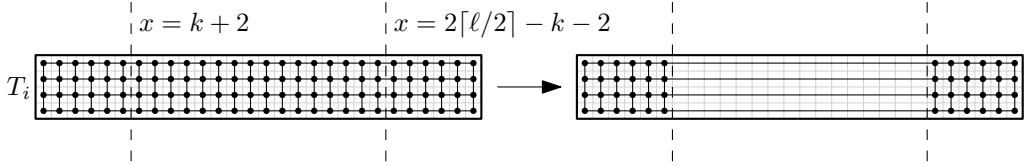}
\caption{Step 5 of the algorithm.
The part of the graph $G(T_i)$ in between the dashed vertical lines is substituted for long horizontal edges.}
\label{figure:algstep5}
\end{figure}


\paragraph{\stepsix}
We finally run a maximum matching algorithm on $G^*$.
Let $M$ be the resulting maximum matching, $N_0$ be the area of $P$, and $N_2$ be the number of vertices of $G^*$.
The algorithm outputs $|M|+(N_0-N_2)/2$ as the value of a maximum domino packing of $P$.
We show below that this step can be performed in time $O(\domcomp)$.
\\ \\
This completes the description of the algorithm. In Section~\ref{sec:structural}, we will provide some structural results on domino packings and polyominos. In Section~\ref{sec:correctness}, we will use these results to argue that the algorithm works correctly. 
In Section~\ref{sec:instancesize}, we will show that the reduced graph $G^*$ has $O(n^3)$ vertices and edges.
Finally, in Section~\ref{sec:runningtime}, we will use this to argue how the steps of the algorithm can be implemented with the claimed running times.

\subsection{Structural results on polyominos and domino packings}\label{sec:structural}
Building up to our structural results on domino packings, we require a definition and a few simple lemmas. 
\begin{figure}[ht]
\centering
\includegraphics{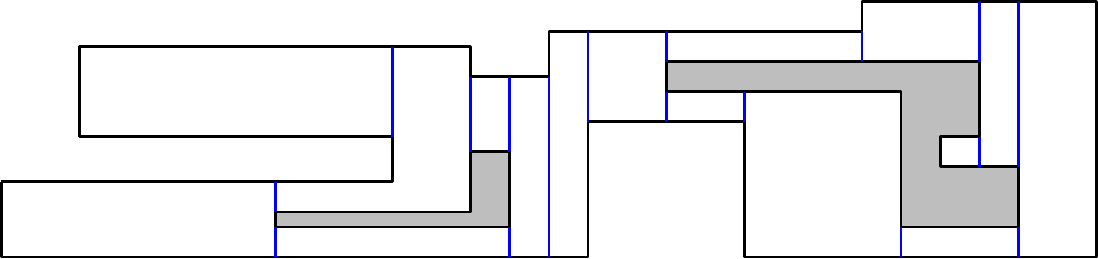}
\caption{A partition of a polyomino with two holes into rectangles using vertical line segments (blue).}
\label{figure:rectanglepartition2}
\end{figure}
Variations of the following lemma is well-known. We present a proof for completeness.
\begin{lemma}\label{lemma:rectanglepartition}
Let $P$ be an orthogonal polygon with $n$ corners and $h$ holes. $P$ can be divided into at most $n/2+h-1$ rectangular pieces by adding only vertical line segments to the interior of $P$. If $P$  is a polyomino, the rectangular pieces can be chosen to be polyominos too.
\end{lemma}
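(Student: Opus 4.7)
The plan is to induct on the number $r$ of reflex corners of $P$; by an angle-sum argument one has $r = n/2 - 2 + 2h$, so this is equivalent to inducting on $n+2h$. The base case $r=0$ forces $P$ to be a convex orthogonal polygon with no holes, hence a single rectangle with $n=4$ and $h=0$, achieving the target bound $n/2+h-1=1$.

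For the inductive step, I pick any reflex corner $v$ of $P$ and shoot a vertical ray from $v$ into the interior of $P$. Exactly one of the two polygon edges incident to $v$ is vertical; I choose the vertical direction opposite to this incident edge, which necessarily enters the interior because the interior angle at $v$ is $270^\circ$. I extend the ray until it first meets $\partial P$ at a point $w$ and take the new segment to be $\sigma=\overline{vw}$.

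I then split into two cases according to whether $v$ and $w$ lie on the same boundary component of $P$. In Case A they do, and $\sigma$ divides $P$ into two polyominos $P_1,P_2$; routine corner bookkeeping gives $n(P_1)+n(P_2)\leq n+2$ (the reflex angle at $v$ splits into a single $90^\circ$ corner in one piece and a flat $180^\circ$ point in the other, and $w$ contributes at most two new $90^\circ$ corners via a T-junction) together with $h(P_1)+h(P_2)=h$, and each $P_i$ has strictly fewer reflex corners than $P$. The induction hypothesis then yields at most $(n+2)/2+h-2=n/2+h-1$ rectangles in total. In Case B, $v$ and $w$ lie on different boundary components, so $\sigma$ merges two boundary components without disconnecting $P$; viewing the resulting slit polyomino as $P'$ with $h(P')=h-1$, $n(P')\leq n+2$, and one fewer reflex corner, induction applied to $P'$ again gives at most $(n+2)/2+(h-1)-1=n/2+h-1$ rectangles, and any such decomposition of $P'$ descends to a valid decomposition of $P$ having $\sigma$ as an internal cut.

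The polyomino refinement is immediate because $\partial P$ lies on the integer grid, so $w$ and hence $\sigma$ have integer coordinates and each piece is a polyomino. The main obstacle I anticipate is the corner bookkeeping when $w$ happens to coincide with an existing corner of $P$ — in particular with another reflex corner whose opposite-direction ray would itself reach $v$ — because then $\sigma$ resolves two reflex corners simultaneously and introduces fewer new corners than in the generic case; I will need to verify that the inequalities $n(P_1)+n(P_2)\leq n+2$ in Case A and $n(P')\leq n+2$ in Case B continue to hold across all such degenerate sub-cases, which amounts to checking a small finite collection of local configurations at $v$ and $w$.
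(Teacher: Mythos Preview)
Your inductive argument is correct and uses the same underlying construction as the paper --- shoot a vertical ray into the interior from each reflex corner --- but the paper organizes the count globally rather than inductively. Concretely, the paper adds \emph{all} such rays at once, lets $s$ denote the number of distinct segments produced (so $s$ is at most the number of reflex corners $n/2-2+2h$, with equality failing only when two opposite reflex corners share a ray), and then invokes the single identity that $s$ segments inserted into a region with $h$ holes yield exactly $s-h+1$ faces, giving $s-h+1\le n/2+h-1$ immediately.

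Your route trades this one-line Euler-type count for an induction with a two-way case split, which forces you to introduce the slit polyomino $P'$ in Case~B and to audit the degenerate endpoint configurations you flag at the end. All of that works, but it is noticeably heavier than the paper's argument; the direct count sidesteps both the slit construction and the corner bookkeeping entirely. If you want to streamline, the key observation you are implicitly re-deriving one step at a time is precisely ``$s$ cuts produce $s-h+1$ pieces,'' which your Case~A (piece count $+1$) and Case~B (hole count $-1$) are just the two local moves establishing.
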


\begin{proof}
For each concave corner of the polygon we add a vertical line segment in the interior of the polygon starting from that corner and going upwards or downwards (depending on the rotation of the given corner). This is illustrated in~\Cref{figure:rectanglepartition2}. Let $s$ be the number of line segments added. It is easy to check that this gives a partition of $P$ into exactly $s-h+1$ rectangles. With $h$ holes, the number of concave corners is $n/2+2(h-1)$, so also $s\leq n/2+2(h-1)$ and the result follows.
\end{proof}
Note that for a polygon with $n$ corners, $h\leq (n-4)/4$, so we have the following trivial corollary.
\begin{corollary}\label{corollary:rectanglepartition}
The number of rectangular pieces in~\Cref{lemma:rectanglepartition} is at most $\frac{3}{4}n-2$.
\end{corollary}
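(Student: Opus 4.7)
The plan is to combine the bound $n/2 + h - 1$ from Lemma~\ref{lemma:rectanglepartition} with a straightforward upper bound on the number of holes $h$ in terms of the total corner count $n$. The one substantive ingredient is the observation that every boundary component of an orthogonal polygon must contribute at least $4$ corners to $n$.

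To justify that, I would argue that each of the $h+1$ boundary components (the outer boundary $\gamma_0$ and the hole boundaries $\gamma_1,\dots,\gamma_h$) is a simple closed rectilinear curve enclosing positive area, so when traversed it undergoes net turning $\pm 2\pi$. Since every corner contributes a turn of $\pm \pi/2$, each component has at least $4$ corners. Summing over all boundary components gives $n \geq 4(h+1)$, i.e., $h \leq (n-4)/4$.

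Plugging this inequality into the bound of Lemma~\ref{lemma:rectanglepartition}, I would conclude that the number of rectangular pieces is at most
\[
\frac{n}{2} + h - 1 \;\leq\; \frac{n}{2} + \frac{n-4}{4} - 1 \;=\; \frac{3n}{4} - 2,
\]
which is exactly the claimed bound. There is no real obstacle here; the argument reduces to a one-line substitution once the elementary corner-count bound for rectilinear closed curves is in hand, so I would keep the proof short and self-contained.
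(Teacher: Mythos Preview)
Your proof is correct and follows exactly the paper's approach: the paper simply notes that $h \leq (n-4)/4$ for a polygon with $n$ corners and substitutes into the bound $n/2 + h - 1$ from Lemma~\ref{lemma:rectanglepartition}. You have merely added a brief justification of the inequality $h \leq (n-4)/4$ via the turning-number argument, which the paper leaves implicit.
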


We next show that the property of consisting parity is preserved under integral offsets.

\begin{lemma}\label{lemma:parityblowuop}
Let $P$ be a polyomino. If $P$ has consistent parity, then $B(P,1)$ and $B(P,-1)$ have consistent parity
\end{lemma}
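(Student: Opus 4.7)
The plan is to exploit the alternative characterization of consistent parity stated in Definition~\ref{def:consistentparity}. After translating the lattice if necessary, assume that $P$ is the union of closed cells $C_{i,j}\mydef[2i,2i+2]\times[2j,2j+2]$ for $(i,j)$ ranging over an index set $I\subseteq\Z^2$. Introduce the shifted family of open cells $C'_{i,j}\mydef(2i+1,2i+3)\times(2j+1,2j+3)$, and observe that each $C'_{i,j}$ meets the interior of exactly the four $P$-cells $C_{i',j'}$ with $(i',j')\in\{i,i+1\}\times\{j,j+1\}$. I want to show that both $B(P,1)$ and $B(P,-1)$ are unions (up to boundary) of cells of the form $C'_{i,j}$, which by the alternative characterization immediately yields consistent parity, with witness $S\mydef C'_{0,0}$.

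The key structural claims I will establish are: \textbf{(a)} $C'_{i,j}\subseteq B(P,1)$ if at least one of the four overlapping $P$-cells lies in $I$, and otherwise $C'_{i,j}\cap B(P,1)=\emptyset$; and \textbf{(b)} $C'_{i,j}\subseteq B(P,-1)$ if all four overlapping $P$-cells lie in $I$, and otherwise $C'_{i,j}\cap B(P,-1)=\emptyset$. Either claim directly furnishes the witness $S$ required by the alternative definition of consistent parity.

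For (a), the forward direction is clear: if $C_{i',j'}\subseteq P$ for some overlapping index, then the $L_\infty$-unit-offset of $C_{i',j'}$ is a closed $4\times 4$ square that contains $C'_{i,j}$. For the converse, any point $(x,y)\in C'_{i,j}$ satisfies $x-1>2i$, $x+1<2i+4$ and the analogous strict inequalities in $y$, so any witness $(a,b)\in P$ with $\max(|x-a|,|y-b|)\le 1$ is trapped in the open box $(2i,2i+4)\times(2j,2j+4)$, which can meet $P$ only through the four overlapping cells. Claim (b) is symmetric: if all four overlapping cells lie in $I$, then the closed $4\times 4$ box $[2i,2i+4]\times[2j,2j+4]$ is contained in $P$ with its open interior lying in $\Int P$, and the strict inclusion $[x-1,x+1]\times[y-1,y+1]\subset(2i,2i+4)\times(2j,2j+4)$ forces $\dist((x,y),P^c)>1$; conversely, if some overlapping $C_{i',j'}$ is missing from $I$, pick signs $s_1,s_2\in\{-1,+1\}$ pointing towards it so that $(x+s_1,y+s_2)$ lies in the open interior of $C_{i',j'}$, which is disjoint from $P$, giving $\dist((x,y),P^c)\le 1$ and hence $(x,y)\notin B(P,-1)$.

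The main subtlety I foresee is carefully tracking the open/closed bookkeeping: the shifted cells $C'_{i,j}$ are open while $P$ is closed, and $B(P,-1)$ is defined via the strict complement of an offset of $P^c$. The strict containments coming from the openness of $C'_{i,j}$ are precisely what is needed to confine the witnessing points $(a,b)$ to the four overlapping $P$-cells and rule out contributions from boundary points of more distant cells sitting on the lattice lines $x=2i$ or $y=2j$. Once this bookkeeping is nailed down, both structural claims follow from a short case analysis, and the lemma is an immediate consequence.
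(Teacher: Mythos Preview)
Your argument is correct and is essentially the paper's own proof with the details filled in: the paper simply takes the witness square $S$ for $P$, sets $S_1\mydef S+(1,1)$, and asserts that $S_1$ witnesses consistent parity for both $B(P,1)$ and $B(P,-1)$, leaving the verification as ``easy to check''. Your shifted cells $C'_{i,j}$ are precisely the translates $S_1+(2i,2j)$, and your claims~(a) and~(b) are exactly this verification made explicit.
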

\begin{proof}
Suppose $P$ has consistent parity. Let $S$ be a $2\times 2$ square as in~\cref{def:consistentparity}. Define $S_1=S+(1,1)$. It is easy to check that for all choices of integers $i,j$ and $S_1'\mydef S_1+(2i,2j)$, either $S_1'\subseteq B(P,1)$ or $S_1'\cap B(P,1) =\emptyset$. Thus $B(P,1)$ has consistent parity. The argument that $B(P,-1)$ has consistent parity is similar.
\end{proof}

\begin{lemma}\label{lemma:layerstructure}
Let $P$ be a connected polyomino of consistent parity and without holes. Define $L_1=B(P,1)\setminus P$ and $L_{-1}=P\setminus B(P,-1)$. Then $G(L_1)$ and $G(L_{-1})$ both have a Hamiltonian cycle of even length.
\end{lemma}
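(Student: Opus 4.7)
My plan is to construct Hamiltonian cycles in $G(L_1)$ and $G(L_{-1})$ explicitly by walking along the boundary of $P$. Since $P$ is connected and hole-free, its boundary $\gamma$ is a single simple closed curve; the consistent-parity assumption forces every edge of $\gamma$ to have even length, and in particular makes the perimeter of $P$ even. I will also use the standard fact that for a hole-free orthogonal polygon the number of convex corners exceeds the number of concave corners by exactly $4$.

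For $L_1$, I will argue that $G(L_1)$ is a connected $2$-regular graph, hence a single cycle. Every cell of $L_1$ is of one of three kinds: a cell in the width-$1$ strip of cells lying just outside an edge of $\gamma$; the unique cell at the outer diagonal of a convex corner of $\gamma$; or the unique cell sitting in the outer ``notch'' of a concave corner of $\gamma$ (shared by the two strips on either side of the corner). A direct local inspection of each type shows that the cell has exactly two $L_1$-neighbors. Since $P$ has no holes, $L_1$ is topologically an annulus around $P$ and therefore connected, and a connected $2$-regular graph is a single cycle, so $G(L_1)$ itself is a Hamiltonian cycle. Counting along the traversal yields $|L_1|=(\text{perimeter})+(\#\text{convex})-(\#\text{concave})=(\text{perimeter})+4$, which is even.

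For $L_{-1}$, the graph $G(L_{-1})$ need not be $2$-regular (thin parts of $P$ can contain cells of degree $4$), so I will build the cycle explicitly. Classify a cell $c\in L_{-1}$ as type (a) if some edge-neighbor of $c$ is in $P^c$, and type (b) otherwise. A type-(b) cell must have some diagonal neighbor in $P^c$, and the common corner $v$ of $c$ and that diagonal neighbor is then a concave corner of $\gamma$ with $c$ as its inner diagonal. The key structural observation, where consistent parity is essential, is that among the four corners of any cell $c$ only one can have coordinates in the common parity class of $\gamma$'s corners. This forces every type-(b) cell to be the inner diagonal of \emph{exactly one} concave corner of $\gamma$, and prevents a cell from being both type (a) and type (b): two edges of $c$ meeting on $\gamma$ would make $c$'s unique even-parity corner a convex corner of $\gamma$, which then cannot also serve as a concave corner.

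Given these facts, the cycle is obtained by traversing $\gamma$ counterclockwise: on each edge of length $2k$ visit the $2k$ strip cells of $P$ just inside the edge in order, at each convex corner pivot through the single shared strip cell, and at each concave corner detour via the unique inner-diagonal cell, which is adjacent in $G$ to the last cell of the incoming strip and to the first cell of the outgoing strip. By the structural analysis every type-(a) cell is visited exactly once and every type-(b) cell is visited exactly once, so this closed walk is a Hamiltonian cycle in $G(L_{-1})$. Counting yields $|L_{-1}|=(\text{perimeter})-(\#\text{convex})+(\#\text{concave})=(\text{perimeter})-4$, which is again even. The main obstacle will be carefully verifying, using consistent parity, that the detours at different concave corners never collide with one another or with strip cells.
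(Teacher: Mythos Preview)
Your approach for $L_{-1}$---tracing the boundary, classifying cells as strip cells (type (a)) or inner diagonals of concave corners (type (b)), and using consistent parity to ensure each cell is hit exactly once---is correct and is essentially what the paper does (though the paper says only ``trace the boundary; consistent parity implies each vertex is visited exactly once'').  Your case analysis makes the role of consistent parity explicit, which is nice.

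However, your argument for $L_1$ has a genuine gap: the claim that $G(L_1)$ is $2$-regular is \emph{false}, even under consistent parity.  Consider the dumbbell
\[
P \;=\; \bigl([0,4]\times[0,4]\bigr)\;\cup\;\bigl([4,6]\times[0,2]\bigr)\;\cup\;\bigl([6,10]\times[0,4]\bigr),
\]
which is connected, hole-free, and has all corners at even coordinates.  The cell $[4,5]\times[3,4]$ lies in $L_1$ (it sits in the notch above the neck, edge-adjacent to the left square), and its $L_1$-neighbours are $[4,5]\times[2,3]$, $[4,5]\times[4,5]$, \emph{and} $[5,6]\times[3,4]$---three of them.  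The last of these is a strip cell for the opposite wall $\{6\}\times[2,4]$; because the notch is only two cells wide, the two facing strips touch.  So $G(L_1)$ is not $2$-regular, and your ``connected $2$-regular $\Rightarrow$ single cycle'' shortcut does not go through.

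The fix is simply to run your own $L_{-1}$ argument for $L_1$ with the roles of convex and concave corners swapped: for the outer layer the two edge-strips \emph{share} a cell at each concave corner (the pivot) and are joined by an extra outer-diagonal cell at each convex corner (the detour).  The same parity observation---only one corner of any cell can lie in the parity class of the corners of $P$---again forces every cell of $L_1$ to be visited exactly once by the boundary trace, regardless of the extra adjacencies that may exist inside narrow notches.  This is precisely the paper's (tersely stated) argument.  As a side remark, for the even-length conclusion you can avoid the perimeter/corner count entirely: $G(L_1)$ and $G(L_{-1})$ are bipartite, so any cycle in them has even length.
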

\begin{proof}
    To obtain a Hamiltonian cycle of $G(L_1)$, we can simply trace $P$ around the outside of its boundary, visiting all cells of $L_1$ in a cyclic order. The corresponding closed trail of $G(L_1)$ visits each vertex at least once. The assumption of consistent parity is easily seen to imply that we in fact visit each vertex exactly once, so the obtained trail is a Hamiltonian cycle.
    The graph $G(L_1)$ is bipartite, so the cycle has even length.
    The argument that $G(L_{-1})$ has a Hamiltonian cycle of even length is similar.
\end{proof}

With the above in hand, we are ready to state and prove our main structural results on domino packings. They are presented in~\cref{lemma:removingQ} and~\cref{lemma:pipes}.

\begin{lemma}\label{lemma:removingQ}
Let $P$ and $P_0$ be polyominos such that $P_0\subseteq P$, $P_0$ has no wholes, and $P_0$ has consistent parity. Let the total number of corners of $P$ and $P_0$ be $n$. Define $r=\lfloor \frac{3}{8}n \rfloor$ and $Q=B(P_0,-r)$.
There exists a maximum packing of $P$ with $1\times 2$ dominos which restricts to a tiling of $Q$.
\end{lemma}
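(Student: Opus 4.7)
My plan is to take any maximum packing $M$ of $P$ and modify it through a sequence of size-preserving swaps into a maximum packing whose restriction to $Q$ is a perfect matching of $G(Q)$. The two key ingredients will be (a) that $Q$ itself is tileable, and (b) that the annular region $P_0 \setminus Q$ has a rich layered structure that lets us reroute any domino of $M$ crossing $\partial Q$.

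First I would establish the needed preliminaries about $Q$ and the surrounding annuli. Iterating Lemma~\ref{lemma:parityblowuop} shows that each erosion $B(P_0,-i)$ for $1 \le i \le r$ has consistent parity, and a standard topological observation shows that erosion of a hole-free polyomino cannot create new holes, so each $B(P_0,-i)$, and in particular $Q$, is a hole-free polyomino of consistent parity. The concentric annuli $L^{(i)} \mydef B(P_0,-(i-1)) \setminus B(P_0,-i)$ foliate $P_0 \setminus Q$, and Lemma~\ref{lemma:layerstructure} applied to $B(P_0,-(i-1))$ produces a Hamiltonian cycle of even length in $G(L^{(i)})$; the alternating edges of that cycle give a tiling of $L^{(i)}$. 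An analogous layer-peeling argument applied inside $Q$ produces a tiling $T$ of $Q$.

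Next I would fix such a tiling $T$ and analyze the symmetric difference $M \triangle T$ as an edge set in $G(P)$. Each connected component is either an alternating even cycle or an alternating path whose endpoints are matched by exactly one of $M$ and $T$. This splits path endpoints into type (i), matched by $M$ but not $T$ and hence lying in $P \setminus Q$, and type (ii), matched by $T$ but not $M$ and hence lying in $Q$ and unmatched by $M$. Since $M$ is maximum, Lemma~\ref{lemma:berge} forbids any component with two type-(ii) endpoints. I then flip $M$ along every cycle and every ``mixed'' (one-(i), one-(ii)) path; each such flip exchanges equal numbers of $M$- and $T$-edges, so $|M|$ is preserved, and afterwards every cell of $Q$ is matched. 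The only remaining obstruction to $M \cap Q$ being a tiling is that some cells of $Q$ may still be matched across $\partial Q$, arising from the still-untouched type-(i)-type-(i) paths. To handle these I would use the Hamiltonian cycles of the layers $L^{(r)}, L^{(r-1)}, \ldots$ to perform further size-preserving swaps, replacing each such crossing by an internal match in $Q$ and an adjusted match inside the layers.

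The choice $r = \lfloor 3n/8 \rfloor$ is compatible with Corollary~\ref{corollary:rectanglepartition}, which bounds the number of rectangles in a rectangle partition of $P_0$ by $\tfrac{3}{4}n - 2$; the factor $\tfrac{1}{2}$ between $\tfrac{3}{4}$ and $\tfrac{3}{8}$ budgets the erosion as half the rectangle count, reflecting the fact that each rectangle can absorb erosion on at most two opposite sides before collapsing, so that all $r$ layers around $Q$ are guaranteed to exist. The main obstacle I anticipate is this final rerouting step: one must argue that the alternating swaps needed to eliminate the remaining crossings of $\partial Q$ can be scheduled simultaneously through the $r$ concentric layers without collisions, and that this is always possible when $r$ is at least the above bound. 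This is a delicate combinatorial/flow-style argument that relies essentially on the parity and Hamiltonian-cycle structure supplied by Lemmas~\ref{lemma:parityblowuop} and~\ref{lemma:layerstructure}.
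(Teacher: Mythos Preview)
Your preliminaries (parity under erosion, Hamiltonian cycles in the layers, the symmetric-difference analysis through the flipping of cycles and mixed paths) are correct and match the paper's setup. The gap is in the last step and in your reading of the constant $r$. Your explanation for $r=\lfloor 3n/8\rfloor$---that a rectangle partition of $P_0$ has at most $\tfrac{3}{4}n-2$ pieces and each ``absorbs erosion on two sides''---is not why this bound appears; the layers exist by definition of $Q=B(P_0,-r)$ regardless of any corner count. More importantly, the remaining obstruction in your plan, the type-(i)--type-(i) paths that leave dominoes across $\partial Q$, is not controlled by $n$: the number of such crossings can be as large as the number of cells in the layer just outside $Q$, which may be exponential in $n$. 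There is no reason $r$ concentric Hamiltonian cycles should suffice to absorb them, and you give no bound on their number, so the ``delicate'' step you flag is in fact the whole difficulty and your approach does not close it.

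The paper proceeds in the opposite direction, and this is the idea you are missing. Instead of starting from an arbitrary maximum matching and repairing it on $Q$, it \emph{constructs} a specific packing that already tiles $Q$ and every annulus $A_i$ via the Hamiltonian cycles, and packs $R\mydef P\setminus P_0$ rectangle by rectangle using Corollary~\ref{corollary:rectanglepartition} applied to $R$ (not to $P_0$). This packing has fewer than $\tfrac{3}{4}n$ uncovered cells, all lying in $R$, so at most $\lfloor\tfrac{3}{8}n\rfloor=r$ augmenting paths are needed to reach a maximum matching. The inductive crux is that the $i$-th augmenting path can be rerouted along the Hamiltonian cycle of $A_i$ whenever it tries to go deeper, so it stays in $R\cup A_1\cup\cdots\cup A_i$ and never touches $Q$. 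That is where $r$ really comes from: one layer is spent per augmenting path, and the number of augmenting paths is bounded because the initial packing was chosen carefully.
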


Let us briefly pause to explain the importance of~\cref{lemma:removingQ}.
Suppose that $P$ contains a region $Q$ as described.
Then~\cref{lemma:removingQ} tells us that \emph{any} domino tiling of $Q$ can be extended to a maximum domino packing of $P$. We can thus disregard $Q$ and focus on finding a maximum packing of $P\setminus Q$, thus reducing the problem to a smaller instance.
This is one of our key tools for reducing the size of the original polyomino $P$ to a matching problem of polynomial size.
Another tool, namely to contract long pipes, will be described in Lemma~\ref{lemma:pipes} below, and in Section~\ref{sec:instancesize}, we will conclude that these two tools used carefully together reduce the packing problem to that of finding a maximum matching in a graph $G^*$ of size $O(n^3)$.

\begin{proof}
It follows from Lemma~\ref{lemma:parityblowuop} that $Q$ has consistent parity, and it can thus be tiled with $2\times 2$ squares and hence with dominos. Let $\mathcal{Q}$ be a tiling of $Q$. 

\begin{figure}[ht]
\centering
\includegraphics[page=2]{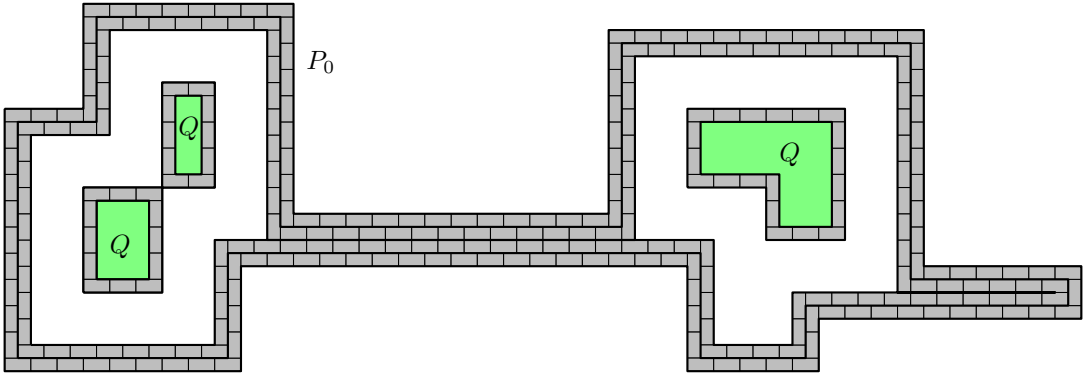}
\caption{The polyomino $P_0$ and the offset $Q$ (shown in green).
The figure also illustrates the 'layers' $A_i$ and their domino tilings, $\mathcal{A}_i$.}
\label{figure:peeling}
\end{figure}

Define $R=P\setminus P_0$ and note that $R$ has at most $n$ corners. It follows from~\Cref{corollary:rectanglepartition} that $R$ can be partitioned into less than $\frac{3}{4}n$ rectangular polyominos. Each of these rectangles has a domino packing with at most one uncovered cell (which happens when the total number of cells in the rectangle is odd). Fix such a packing $\mathcal{R}$ of the rectangles of $R$ with dominos. 

We next describe a tiling of $P_0\setminus Q$ as follows. For integers $1\leq i\leq r$ we define, $A_i=B(P_0,-i+1))\setminus B(P_0,-i)$. Intuitively, we can construct $Q$ from $P_0$ by peeling off the `layers' $A_i$ of $P_0$ one at a time. Let $i\in \{1,\dots,r\}$ be fixed. As $P_0$ has consistent parity, it follows from~\cref{lemma:parityblowuop} that $B(P_0,-i+1)$ has consistent parity.
It is also easy to check that $B(P_0,-i+1)$ has no holes either, and it then follows from~\cref{lemma:layerstructure} that each connected component of $G(A_i)$ has a Hamiltonian cycle of even length.
These cycles give rise to a natural tiling of $A_i$; if $(v_1,\dots,v_{2k})$ is the sequence of cells corresponding to such a cycle, then $\{v_1\cup v_2,v_3\cup v_4,\dots,v_{2k-1} \cup v_{2k}\}$ is a tiling of the cells of the cycle, and the union of such tilings over all connected components in $G(A_i)$ gives a tiling of $A_i$ with dominos.
Denote this tiling by $\mathcal{A}_i$.
See~\Cref{figure:peeling} for an illustration of this construction. 

Combining the tilings $\mathcal{A}_1,\dots,\mathcal{A}_r$ and $\mathcal{Q}$ with the packing $\mathcal{R}$, we obtain a domino packing, $\mathcal{P}$, of $P$ where at most $\frac{3}{4}n$ cells of $P$ are uncovered. We now wish to extend this packing to a maximum packing in a way where we do not alter the tiling $\mathcal{Q}$ of $Q$. If we can do this, the result will follow. Let $M$ be the matching corresponding to $\mathcal{P}$ in $G(P)$. We make the following claim. 

\begin{claim}
Let $k\leq r$. Suppose that the matching $M$ can be extended to a matching of size $|M|+k$. Then this extension can be made using a sequence $C_1,\ldots,C_k$ of $k$ augmenting paths one after the other (that is, $C_i$ is an augmenting path \emph{after} the matching has been extended using $C_1,\dots,C_{i-1}$) such that for each $i\in\{1,\ldots,k\}$, we have that $C_i$ only uses vertices of $G(R\cup \bigcup_{j=1}^i A_j)$.
\end{claim}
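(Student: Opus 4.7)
The plan is to prove the claim by induction on $k$, working with a slightly strengthened statement to make the induction go through. Specifically, for each $i\in\{0,\ldots,r\}$ and any matching $\tilde M$ of $G(P)$ that (i) is perfect on $P_0$ and (ii) satisfies $\tilde M\supseteq\mathcal{A}_{i+1}\cup\cdots\cup\mathcal{A}_r\cup\mathcal{Q}$, I would prove: if $\tilde M$ extends to size $|\tilde M|+k'$ for some $k'\le r-i$, then the extension can be realised by augmenting paths $C_1,\ldots,C_{k'}$ with $C_j\subseteq G(R\cup A_1\cup\cdots\cup A_{i+j})$. The original claim is the case $i=0$, $\tilde M=M$, $k'=k$, and the base case $k'=0$ is vacuous.

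For the inductive step the crux is a sub-claim: \emph{if $\tilde M$ is as above and is not maximum, then there is an augmenting path of $\tilde M$ inside $G(R\cup A_1\cup\cdots\cup A_{i+1})$.} Granting this, pick such a path as $C_1$ and form $\tilde M'\mydef\tilde M\triangle C_1$. Then $\tilde M'$ is again perfect on $P_0$ (the two endpoints of $C_1$ are unmatched in $\tilde M$ and thus lie in $R$, while every interior vertex of $C_1$ is incident to one $M$-edge and one non-$M$-edge of $C_1$ and hence stays matched after flipping) and still contains $\mathcal{A}_{i+2}\cup\cdots\cup\mathcal{A}_r\cup\mathcal{Q}$, since $C_1$ touches no cell of $A_{i+2}\cup\cdots\cup Q$. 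Applying the inductive hypothesis to $\tilde M'$ with index $i+1$ and with $k'-1$ augmenting paths yields $D_1,\ldots,D_{k'-1}$ with $D_j\subseteq G(R\cup A_1\cup\cdots\cup A_{i+1+j})$; setting $C_{j+1}\mydef D_j$ for $j\ge 1$ completes the sequence.

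The plan for the sub-claim is a cycle-reroute using the Hamiltonian cycle of $G(A_{i+1})$ from Lemma~\ref{lemma:layerstructure}. By Berge's Lemma (Lemma~\ref{lemma:berge}) some augmenting path $C$ of $\tilde M$ exists in $G(P)$, and the goal is to modify $C$ to avoid $A_{i+2}\cup\cdots\cup Q$. Consider a maximal excursion of $C$ into $A_{i+2}\cup\cdots\cup Q$, entering $A_{i+1}$ at $a$ and exiting at $b$; the edges through which the excursion enters and leaves are necessarily non-$M$, since $\tilde M\supseteq\mathcal{A}_{i+1}\cup\mathcal{A}_{i+2}\cup\cdots$ means no $M$-edge connects consecutive layers. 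Because $G(P)$ is bipartite under the checkerboard colouring of cells, the alternation of $C$ forces $a$ and $b$ to have opposite colours, hence opposite parities along the even-length Hamiltonian cycle of $A_{i+1}$. Both cycle arcs between $a$ and $b$ therefore have odd length, and exactly one of them—namely the one starting at $a$ through its $\mathcal{A}_{i+1}$-partner and approaching $b$ from $b$'s $\mathcal{A}_{i+1}$-partner—begins and ends with $\mathcal{A}_{i+1}$-edges. Splicing this arc in place of the excursion preserves augmentation and confines the resulting path to $R\cup A_1\cup\cdots\cup A_{i+1}$.

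The main obstacle I anticipate is making the rerouting globally coherent: the replacement arcs from distinct excursions and the remaining pieces of $C$ must together form a \emph{simple} augmenting path, which may fail if the chosen arc revisits cells already used by $C$. I expect this to be handled either by processing excursions from the deepest layer inward (so that arcs used to eliminate inner excursions lie in layers unaffected by later outer reroutings) or, alternatively, by constructing the augmenting path from scratch: combining an alternating walk from an unmatched cell in $R$ through the rectangular partition $\mathcal{R}$ of Lemma~\ref{lemma:rectanglepartition} to reach the Hamiltonian cycle of $A_{i+1}$, traversing the cycle in the direction dictated by $\mathcal{A}_{i+1}$, and exiting symmetrically to another unmatched cell of $R$ with the opposite colour.
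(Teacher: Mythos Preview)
Your inductive framework and the idea of rerouting along the Hamiltonian cycles of the layers is exactly the approach the paper takes, but your execution of the rerouting has a gap that you yourself flag, and the paper resolves it by a different and much cleaner manoeuvre than the two fixes you sketch.

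The issue is your choice of splice points. You take a \emph{maximal excursion} into $A_{i+2}\cup\cdots\cup Q$ and splice at the flanking cells $a,b\in A_{i+1}$. But then the predecessor of $a$ on $C$ is necessarily $a$'s $\mathcal{A}_{i+1}$-partner (the edge into $a$ is an $M$-edge, and $\tilde M\supseteq\mathcal{A}_{i+1}$), so that vertex already lies on the Hamiltonian cycle of $A_{i+1}$; your arc is therefore liable to revisit it. More generally, $C$ may wander around in $A_{i+1}$ before and after each excursion, and per-excursion splicing gives no control over collisions with those pieces. (Incidentally, because the edge into $a$ is $M$, the arc you want must \emph{start with a non-$M$ edge}, i.e.\ go from $a$ \emph{away} from its partner---the opposite of what you wrote.)

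The paper avoids all of this by choosing different splice points. For each Hamiltonian cycle $D$ of a connected component of $G(A_{i+1})$, it takes the \emph{first} index $p$ and the \emph{last} index $q$ with $u_p,u_q\in D$, and replaces the entire subpath $u_p,\ldots,u_q$ (which may contain several excursions and several visits to $A_{i+1}$) by a single arc of $D$. Now simplicity is automatic: $u_1,\ldots,u_{p-1}$ and $u_{q+1},\ldots,u_{2\ell}$ lie in the exterior of $D$ (since $P_0$ has no holes and both endpoints of $C$ lie in $R$), the arc lies on $D$, and these three sets are disjoint. The edge $u_{p-1}u_p$ crosses from outside $D$ onto $D$ and is therefore non-$M$, so the arc must begin with the $M$-edge from $u_p$ to its $\mathcal{A}_{i+1}$-partner; the parity argument you gave then shows the arc has odd length and ends at $u_q$ via its partner, so alternation is preserved. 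Doing this once per component of $G(A_{i+1})$ (the interiors of the different cycles are pairwise disjoint) yields an augmenting path entirely inside $R\cup A_1\cup\cdots\cup A_{i+1}$. Neither ``processing from the deepest layer inward'' nor constructing an augmenting path from scratch through the rectangle decomposition is needed.
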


Before proving this claim, we first argue how the result follows.
Since there are less than $\frac{3}{4}n$ unmatched vertices in $M$, we can extend $M$ to a maximum matching using at most $r=\lfloor \frac{3}{8}n \rfloor$ augmenting paths. 
By the claim, these paths can be chosen so that they avoid the vertices of $G(Q)$.
In particular, we never alter the matching of $G(Q)$, so the final maximum matching restricted to $G(Q)$ is just the tiling $\mathcal Q$.
\begin{figure}[ht]
\centering
\includegraphics[page=3]{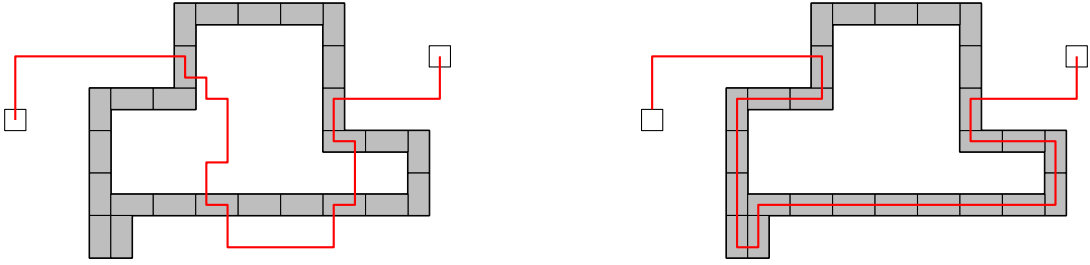}
\caption{Left: An alternating path between two unmatched vertices which enters a connected component of $G(A_k)$. Right: Modifying the alternating path using the the Hamiltonian cycle of the connected component.}
\label{figure:changingpath}
\end{figure}

We proceed to prove the claim by induction on $k$.
The statement is trivial for $k=0$, so let $1\leq k \leq r$ satisfy the assumptions of the claim and suppose inductively that $C_1,\ldots,C_{k-1}$ can be chosen such that for each $i\in\{1,\ldots,k-1\}$, we have that $C_i$ only uses vertices of $G(R\cup \bigcup_{j=1}^i A_j)$. After augmenting the matching using $C_1,\ldots,C_{k-1}$, we have only modified the matching restricted to $G(R\cup \bigcup_{j=1}^k A_j)$.  By Lemma~\ref{lemma:berge}, we can find an augmenting path $C_k'$ connecting two unmatched vertices $u,v$ of $G(P)$.
We will modify $C_k'$ to a path $C_k$ with $C_k \subset R  \cup   \bigcup_{j=1}^k A_j$.
Write $C_k':u=u_1,u_2,\dots,u_{2\ell}=v$.
Let $D$ be a Hamiltonian cycle of one of the connected components of $G(A_k)$; see Figure~\ref{figure:changingpath}.
If the path $C_k'$ ever enters the vertices of $D$, we let $i$ be minimal such that $u_i\in D$ and $j$ be maximal such that $u_j\in D$.
We can now replace the subpath $u_i,u_{i+1},\dots,u_j$ of $C_k$ with part of the Hamiltonian cycle $D$.
Whether we go clockwise or counterclockwise along $D$ depends on whether $u_i$ is matched with $u_{i+1}$ in a clockwise or counterclockwise fashion in $D$.
We do the same modification for every Hamiltonian cycle $D$ corresponding to a connected component of $G(A_k)$ that that $C_k'$ intersects.
Note that each cycle $D$ partitions the vertices $G(P)\setminus D$ into an interior and an exterior part.
Since $P_0$ has no holes and $u,v\in R$, the original path $C_k'$ enters $D$ from the exterior at $u_i$ and likewise leaves $D$ into the exterior at $u_j$.
Also note that $Q$ is contained in the interior parts of the cycles of $G(A_k)$.
It then follows that the final resulting path $C_k$ avoids $Q$ and $A_j$ for $j>k$, so it is contained in $R\cup \bigcup_{j=1}^k A_j$.

\end{proof}

\begin{figure}[ht]
\centering
\includegraphics{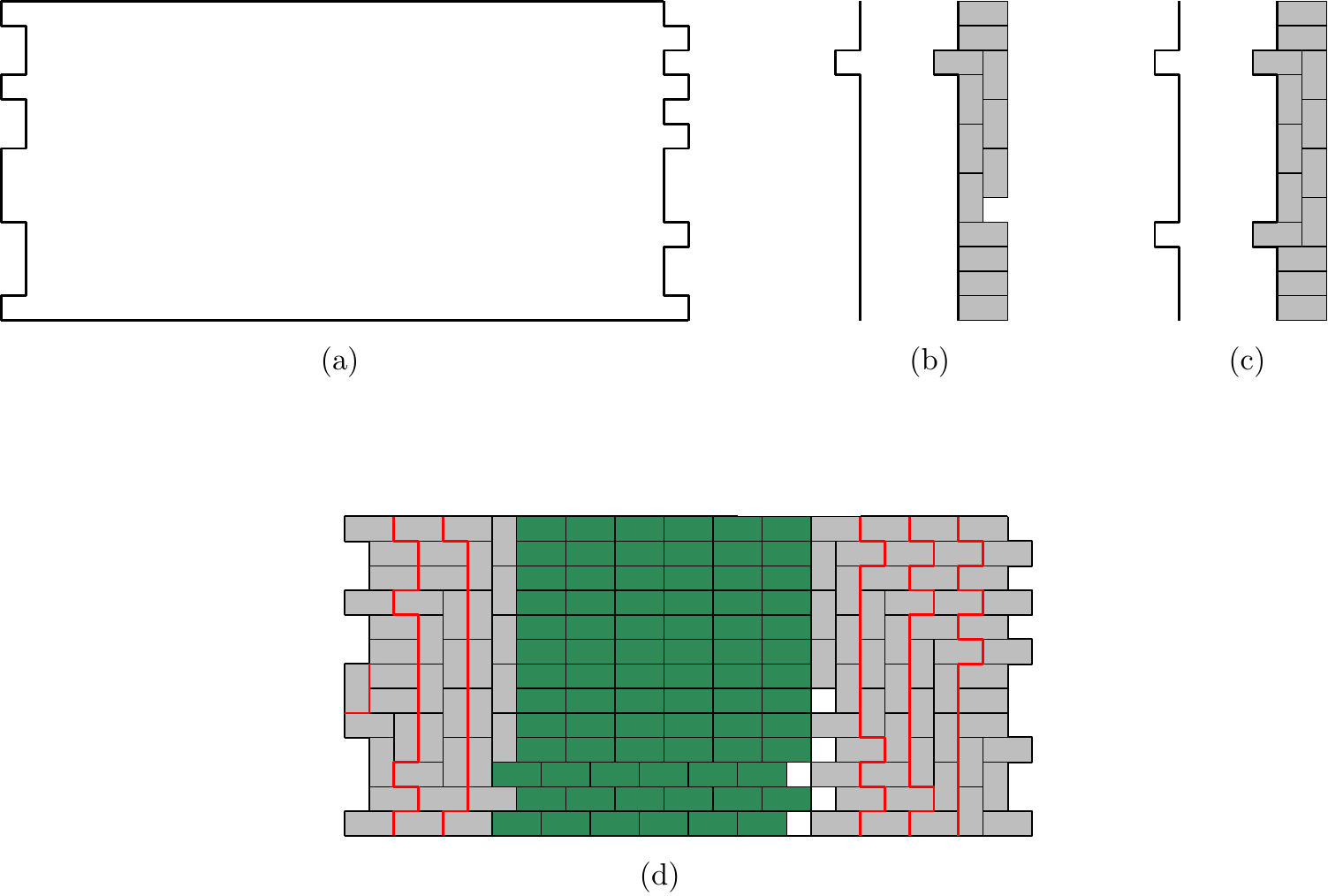}
\caption{(a) The polyomino $P$.
(b) Shifting a notch.
(c) Cancelling two notches.
(d) The partial packing obtained after shifting notches downwards and cancelling notches when possible (gray) and the horizontal dominos completing the packing (green).}
\label{figure:pipe}
\end{figure}

As it turns out, Lemma~\ref{lemma:removingQ} is not in itself sufficient to yield a polyomino with area $n
^{O(1)}$.
For example, $P$ may contain exponentially long and narrow pipes (see~\Cref{def:pipe}), say of width $n/10$, which will remain even when $Q$ is removed. Surprisingly, it turns out that such narrow pipes are the only obstacles that prevent us from reducing to an instance of size polynomial in $n$. This is what motivates the following lemma which intuitively yields a reduction for shortening long narrow pipes.

\begin{lemma}\label{lemma:pipes}
Let $k,\ell \in \N$ with $\ell$ even. Let $L\subseteq [-1,0]\times [0,k]$, $R\subseteq [\ell,\ell+1]\times [0,k]$ be polyominos and  define $P=L\cup R\cup ([0,\ell]\times[0,k])$. Color the cells of the plane in a chessboard like fashion and let $b$ and $w$ be respectively the number of black and white cells contained in $P$. Assume without loss of generality that $b\geq w$.
If $\ell\geq 2k$, then the number of uncovered cells in a maximum domino packing of $P$ is exactly $b-w$.
Moreover, there exists a maximum domino packing such that the rectangle $[k+1,\ell-k-1]\times [0,k]$ is completely covered and all dominos intersecting the rectangle are horizontal.
\end{lemma}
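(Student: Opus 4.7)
The plan begins with the lower bound. Observe that $G(P)$ is bipartite under the standard chessboard $2$-coloring, so every matching covers equal numbers of black and white cells. Hence any domino packing leaves at least $b-w$ cells uncovered.

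For the upper bound I would construct a packing achieving $b-w$ uncovered cells that is horizontal on the middle. Since $\ell$ is even and $\ell \geq 2k$, the middle rectangle $M := [k+1, \ell-k-1] \times [0,k]$ has even width $\ell-2k-2 \geq 0$, so each of its $k$ rows can be tiled by $(\ell-2k-2)/2$ horizontal dominoes; each such domino covers one black and one white cell, so $M$ contributes $0$ to the color imbalance. The problem thus reduces to packing the two peripheries $P_L := L \cup ([0,k+1] \times [0,k])$ and $P_R := ([\ell-k-1,\ell] \times [0,k]) \cup R$ independently. For each periphery I would employ the local ``shift'' and ``cancel'' moves illustrated in Figure~\ref{figure:pipe}(b,c): two notches of $L$ in consecutive rows are paired via a vertical domino in column $[-1,0]$; a lone notch is absorbed into its row's horizontal tiling with the uncovered cell placed either at the notch itself or at the row's right endpoint; and pairs of row-endpoint uncovered cells in adjacent rows are then paired vertically. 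Using the standard fact that simply-connected polyominos such as $P_L$ admit matchings of size $\min(b_L', w_L')$, this yields a matching of $P_L$ leaving $|b_L'-w_L'|$ uncovered cells of $P_L$'s dominant color, and symmetrically for $P_R$.

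The main obstacle is ensuring that the two per-side uncovered counts sum exactly to $b-w$, rather than overshoot when $(b_L'-w_L')$ and $(b_R'-w_R')$ have opposite signs. To handle this I would exploit a second horizontal tiling option per row of $M$: the ``shifted'' tiling $[k,k+2], [k+2,k+4], \ldots, [\ell-k-2, \ell-k]$, which absorbs the cells $[k,k+1] \times [j,j+1]$ (from $P_L$) and $[\ell-k-1,\ell-k] \times [j,j+1]$ (from $P_R$) into middle horizontal dominoes straddling the boundary of $M$. A short parity check using that $\ell$ is even shows these two absorbed cells always have opposite colors, so shifting row $j$ transfers one unit of color imbalance from one periphery to the other while leaving $b-w$ invariant. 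Rows of different parities transfer in opposite directions, giving up to $\lceil k/2 \rceil$ shifts per direction; since the rectangular parts $[0,k+1] \times [0,k]$ and $[\ell-k-1,\ell] \times [0,k]$ are themselves color-balanced, the quantities $|b_L'-w_L'|$ and $|b_R'-w_R'|$ are each bounded by $\lceil k/2 \rceil$ (coming entirely from $L$ or $R$, each a subset of a $k$-cell column of alternating colors). Thus the available shifts suffice to align both peripheries' imbalances with the sign of $b-w$, producing a packing with exactly $b-w$ uncovered cells, all of the dominant color, and only horizontal dominoes intersecting $M$.
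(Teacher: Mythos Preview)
Your reduction to independently packing the two peripheries $P_L$ and $P_R$ rests on what you call ``the standard fact that simply-connected polyominos such as $P_L$ admit matchings of size $\min(b_L',w_L')$.'' This is not a standard fact, and it is false for general simply-connected polyominos. For instance, take two plus-pentominoes stacked vertically and joined through their central columns (a $3\times 6$ box with the four corner cells and the four mid-height side cells removed). This region is hole-free with $b=w=5$, yet the three black cells forming the top cross-arm have only a single common white neighbour, forcing deficiency at least $2$. So the fact you invoke needs a proof specific to the shape of $P_L$---and proving it is essentially the content of the lemma itself, just on a shorter pipe. The sketch you give (pair adjacent notches vertically, then tile rows horizontally, then pair leftover endpoints vertically) does not obviously achieve deficiency $|b_L'-w_L'|$; for example, when $k$ is odd and $L$ has two notches in rows $0$ and $2$, your description leaves rows of length $k+2$ (odd) that each produce an uncovered cell, and it is not clear those uncovered cells can all be paired afterwards.

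Your row-shifting trick for the sign-mismatch case is a nice idea and the counting of available shifts is correct, but it compounds the gap: after shifting, each periphery becomes a $k\times k$ block with cells attached on \emph{both} sides, and you again need to know that deficiency equals colour imbalance for these new shapes. The paper avoids the whole issue by never splitting into peripheries. It processes each side's notches layer by layer, shifting them toward the bottom and cancelling opposite-colour pairs as it goes; after at most $\lfloor k/2\rfloor$ width-$2$ layers per side, the surviving notches on each side are all one colour and sit along the boundary of the untouched middle, which is then filled row by row with horizontal dominoes. The sign-mismatch case is handled in that final horizontal fill (rows carrying one notch of each colour on opposite ends have even length and tile perfectly), so no separate transfer mechanism is needed.
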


\begin{proof}
As each domino covers one black and one white cell, any packing will leave at least $b-w$ cells uncovered. We thus need to demonstrate the existence of a packing with exactly $b-w$ uncovered cells. To see that such a packing exists, it is very illustrative to consider~\cref{figure:pipe}. An example of a polyomino, $P$, is illustrated in ~\cref{figure:pipe}(a). We first tile as many cells of $L$ and $R$ as possible, such that no two uncovered cells of $L$ and $R$ share an edge. We will call these uncovered cells \emph{notches}. We next show how we can alter the configuration of notches by only adding a layer of width $2$. First, we note that a notch can be shifted an even number of cells downwards or upwards using the construction in
~\cref{figure:pipe} (b). In case we have two notches of different colours in the chessboard coloring and with no other notches between them, we can use the construction in~\cref{figure:pipe} (c) to cancel  these two notches from the configuration of notches. Our goal is to use the constructions of (b) and (c) to shift the notches of $L$ and $R$ downwards, cancelling notches if possible. Going through the notches of $L$ from bottom to top, we shift them down as far as possible using the construction in (b). In case a notch has a different color than the nearest notch below it, we use construction (c) to cancel them. We further add horizontal dominos such that the configuration of notches is preserved at all other positions than where the shifting or cancelling occurs. We do a similar thing for $R$. The process from start to end is illustrated in~\cref{figure:pipe}(d) which also shows the resulting partial tiling. The red lines in the figure separate the steps of the process.

Note that each added layer of the process has thickness $2$. Initially, each of $L$ and $R$ consists of at most $\lceil k/2\rceil$ notches (after the first step which isolates the notches). Moreover, if $k$ is odd and there is $\lceil k/2\rceil$ notches in $L$ or $R$, then no shifting/cancelling is needed on that side.
It then follows from the assumption $l\geq 2k$ that we are able to finish this partial packing. Let $b'$ and $w'$ be the number of black and white cells uncovered by this partial packing. Then $b'-w'=b-w$. It is also easy to check that we can complete the packing of $P$ using only horizontal dominos and leaving exactly $b'-w'=b-w$ cells uncovered. This completes the proof.
\end{proof}


Lemma~\ref{lemma:pipes} allows us to 'shorten' long and narrow pipes when searching for the maximum domino packing. It follows from the Lemma that going from $G_3$ to $G^*$ in the shortening step 5 of our algorithm does not alter the number of unmatched vertices in a maximum matching. We will return to this in Section~\ref{sec:correctness}.
Note that, unlike in~\cref{lemma:removingQ}, we cannot simply remove (part of) the pipe from the polyomino. 
The following lemma allows us to upper bound the size of a set of overlapping pipes no two of which are contained in the same larger pipe.


\begin{lemma}\label{lemma:planargraph}
Let $G=(V,E)$ be a graph of order $n\geq 2 $ with no self-loops but potential multiple edges. Suppose that $G$ has a planar embedding such that for any pair of multiple edges $(e_1,e_2)$, the Jordan curve formed by $e_1$ and $e_2$ in the planar embedding of $G$ contains a vertex of $G$ in its interior. Then the number of edges of $G$ is upper bounded by $3n-5$.
\end{lemma}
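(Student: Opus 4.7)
The plan is to apply Euler's formula together with a careful count of face degrees in the planar embedding. The key structural observation will be that, under the hypothesis, at most one face of the embedding has degree $2$.

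First, I would reduce to the case where $G$ is connected. If $G$ has $c \geq 2$ components, I would iteratively pick two components that both touch a common face of the current planar embedding and add an edge between them drawn inside that face. Such a pair always exists because the plane is connected, so the ``face-adjacency'' graph on the components is connected. Adding such an edge preserves planarity and keeps the hypothesis intact: it cannot create a new multi-edge pair since its endpoints lay in different components and were thus not previously joined, while every pre-existing multi-edge pair still has the vertex it needs inside its Jordan curve. After $c-1$ such additions I get a connected planar multigraph $G'$ on $n$ vertices with $|E(G')|=|E|+(c-1)$ satisfying the same hypothesis, so if the bound holds for connected graphs then $|E| \leq 3n-5-(c-1) \leq 3n-5$.

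Now assume $G$ is connected. The base case $n=2$ is immediate: the hypothesis forbids any multi-edge, since both vertices lie on any candidate Jordan curve and none can be in its interior, hence $|E| \leq 1 = 3n-5$. For $n \geq 3$, connectedness gives a 2-cell embedding, so Euler's formula yields $V-E+F=2$ and the handshake identity gives $\sum_f \deg(f)=2E$, where $\deg(f)$ counts edge-face incidences with an edge counted twice if the same face sits on both sides. Since there are no self-loops, no face has degree $1$. The key claim is that at most one face has degree $2$. Indeed, a degree-$2$ face $f$ is bounded by a pair of parallel edges $e_1,e_2$ between some vertices $u,v$, and because $f$ is an open disk its boundary $\partial f = e_1 \cup e_2$ is a Jordan curve $\gamma$; hence $f$ coincides with exactly one of the two open components of $\R^2 \setminus \gamma$. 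By hypothesis the bounded component contains a vertex of $G$, and $f$ is disjoint from $V$, so $f$ must be the unbounded component, and thus coincides with the unique unbounded face of the embedding. So at most one face has degree $2$, which gives $2E = \sum_f \deg(f) \geq 3(F-1)+2 = 3F-1$. Substituting $F = 2+E-V$ from Euler's formula yields $E \leq 3V-5 = 3n-5$.

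The main obstacle is the degree-$2$ analysis. Connectedness of $G$ is essential there, because it forces a 2-cell embedding; without it, a degree-$2$ face could for example be an annular region surrounding an isolated component of $G$ rather than a disk with Jordan-curve boundary, and the argument that forces it to be the unbounded face would break. This is precisely why the reduction to the connected case is done first, and one should verify when adding connecting edges that no new multi-edge pair is introduced — which holds because endpoints in different components are not previously joined.
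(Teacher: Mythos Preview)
Your proof is correct, and it takes a genuinely different route from the paper's.

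The paper argues by strong induction on $n$: it selects a pair of parallel edges whose Jordan curve is \emph{minimal} (no other such curve nested inside), splits $G$ along that curve into an ``inside'' graph $G_1$ (which is simple, by minimality) and an ``outside'' graph $G_2$ (which still satisfies the hypothesis), and combines the simple-planar bound $3(k+2)-6$ for $G_1$ with the inductive bound $3(n-k)-5$ for $G_2$.

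Your argument instead extracts the content of the hypothesis in one stroke: after reducing to the connected case, you observe that every \emph{bounded} face of degree $2$ would be a bigon whose interior contains no vertex, contradicting the hypothesis; hence only the unbounded face can have degree $2$, and Euler's formula together with $2|E|\ge 3(F-1)+2$ finishes. This is more direct and isolates the structural reason for the ``$-5$'' rather than ``$-6$'': the hypothesis forces all bounded faces to be triangles-or-larger, exactly as in a simple planar graph, with the sole possible exception of the outer face. The paper's inductive decomposition, on the other hand, is closer in spirit to how the lemma is actually applied downstream (Lemma~\ref{lemma:numberofpipes} builds the graph by nesting pipe-edges), and it avoids the reduction-to-connected preamble. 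Both approaches are short; yours is arguably the cleaner of the two.
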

\begin{proof}
In what follows, we will use the classic result that the  number of edges of a simple planar graph of order $n$ is upper bounded by $3n-6$.  
We prove the result by strong induction on $n$. For $n=2$ the result is trivial so let $n>2$ be given and suppose the bound holds for smaller values of $n$. 
Let $\mathcal{E}$ be a planar embedding of $G$. Let $(e_1,e_2)$ be a pair of distinct multiple edges that is minimal in the sense that no other such pair $(e_1',e_2')$ exists with the following property: If $\gamma$ and $\gamma'$ are the Jordan curves formed by $(e_1,e_2)$ and  $(e_1',e_2')$ in $\mathcal{E}$, then $\Int \gamma ' \subset  \Int \gamma $. 
Assume that $e_1$ and $e_2$ connect vertices $u$ and $v$.
Let $V'$ be the set of vertices of $G$ that are contained in $\Int \gamma$ under $\mathcal{E}$ and let $k=|V'|$. 
Then, $1 \leq k \leq n-2$.
Let $G_1=(V_1,E_1)$ where $V_1=V'\cup\{u,v\}$ and $E_1$ is formed by $e_1$  together with all edges of $G$ that are incident to a vertex in $V'$. Let $G_2=(V_2,E_2)$ where $V_2=V\setminus V'$ and $E_2=E\setminus E_1$. 
Clearly $G_1$ is a simple planar graph on $k+2$ vertices. Moreover, it is readily checked that $G_2$ is a planar graph on $n-k$ vertices which satisfies the assumptions of the lemma. Note that $2\leq n-k<n$. 
It thus follows from the inductive hypothesis that the number of edges of $G$ is upper bounded by
$$
3(n-k)-5+3(k+2)-6=3n-5.
$$
This completes the proof.


\end{proof}

\begin{lemma}\label{lemma:numberofpipes}
Let $P$ be a polyomino with $n$ corners. Let $Q_1,\dots,Q_r$ be pairwise disjoint pipes of $P$, no two of which are contained in a larger pipe of $P$, i.e., for no two distinct $i,j$ does there exist a pipe $Q$ with $Q_i\cup Q_j\subseteq Q$. Then $r\leq 3n-5$.
\end{lemma}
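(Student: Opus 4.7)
The plan is to reduce the claim to Lemma~\ref{lemma:planargraph} by encoding the pipes as edges of a planar multi-graph $G$ with $n$ vertices. I would place one vertex $v_e$ of $G$ on each of the $n$ boundary edges $e$ of $P$. For every pipe $Q_i$, its two sides lying on $\partial P$ sit on two \emph{distinct} boundary edges $e, e'$ of $P$ (they have different $x$- or $y$-coordinates, since they are parallel and a positive distance apart), so I add an edge $f_i$ of $G$ between $v_e$ and $v_{e'}$. This gives $|V(G)|=n$, $|E(G)|=r$, and no self-loops.

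For the planar embedding, draw $f_i$ as a simple curve that starts at $v_e$, travels along $e$ (perturbed slightly into the interior of $P$), reaches the side of $Q_i$ on $e$, crosses $Q_i$ to its opposite side, and then travels along $e'$ to $v_{e'}$. Since $Q_1,\dots,Q_r$ are pairwise interior-disjoint and the curves along $e$ and $e'$ can be perturbed within $P$, distinct edges can be arranged to meet only at common endpoints. With $v_e$ placed at a canonical point on $e$ (for instance, a fixed endpoint), the cyclic order of the edges $f_i$ incident to $v_e$ agrees with the linear order in which the pipes meet $e$ along its length, yielding a consistent planar embedding.

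To verify the multi-edge condition of Lemma~\ref{lemma:planargraph}, suppose $f_i, f_j$ are parallel edges between $v_e$ and $v_{e'}$, corresponding to pipes $Q_i, Q_j$ whose boundary sides sit on $e$ and $e'$. Let $R$ be the smallest axis-parallel rectangle that contains $Q_i\cup Q_j$ and has its two opposite sides on $e$ and $e'$. If $R\subseteq P$, then $R$ is itself a pipe of $P$ containing both $Q_i$ and $Q_j$, contradicting the hypothesis. Hence $R\not\subseteq P$, so $\partial P$ must enter the open strip of $R$ strictly between $Q_i$ and $Q_j$. In the embedding above, this strip is precisely the bounded region enclosed by the Jordan curve $f_i\cup f_j$, so any boundary edge of $P$ lying in the strip contributes a vertex of $G$ in its interior, as required.

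Combining these observations, $G$ satisfies the hypotheses of Lemma~\ref{lemma:planargraph}, which gives $r=|E(G)|\le 3|V(G)|-5=3n-5$. The main technical obstacle is formalizing the planar embedding so that the bounded face of each pair of parallel edges coincides with the geometric strip between the two corresponding pipes; this reduces to a routine check once the vertex placements on the boundary edges of $P$ and the cyclic orders around each $v_e$ are fixed compatibly with the pipe positions.
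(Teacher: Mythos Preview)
Your proposal is correct and follows essentially the same approach as the paper: both build the graph $G$ whose vertices are the $n$ boundary edges of $P$ and whose edges correspond to the pipes $Q_i$, then apply Lemma~\ref{lemma:planargraph} after observing that a pair of parallel edges would force a larger pipe unless some boundary edge of $P$ lies between the two corresponding pipes. The paper's proof is in fact terser than yours about the planar embedding and the multi-edge condition, so your additional detail (and your explicit acknowledgment that making the embedding precise is the only remaining routine check) is entirely appropriate.
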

\begin{proof}
We construct a graph $G=(V,E)$ as follows. $V$ is the set of (geometric) edges of $P$. For each $i\in\{1,\dots,r\}$ we let $u_i,v_i\in V$ be the two parallel edges of $P$ which contain two opposite sides of $Q_i$ and we add the edge $(u_i,v_i)$ to $E$. $G$ is thus a graph of order $n$ with exactly $r$ edges. We note that $G$ may have multiple edges but it has a natural planar embedding, $\mathcal{E}$, such that for each pair of multiple edges $(e_1,e_2)$ the Jordan curve formed by $e_1$ and $e_2$ under $\mathcal{E}$ contains a vertex of $G$ in its interior (see Figure~\ref{fig:pipegraph}). Here we used that for any two $i,j$ with $1\leq i <j\leq r$, the two pipes $Q_i$ and $Q_j$ are not contained in a larger pipe of $P$. It thus follows from Lemma~\ref{lemma:planargraph} that $r\leq 3n-5$.

\end{proof}
\begin{figure}[ht]
\centering
\includegraphics{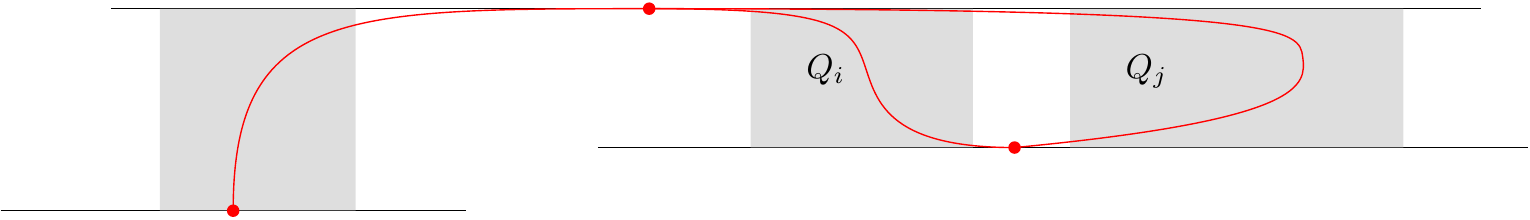}
\caption{The construction of the planar graph $G$. Pipes are shown in grey. The white rectangle between, $Q_i$ and $Q_j$ must contain some nonempty subset of $\partial P$ in its interior --- otherwise this rectangle could joined with $Q_i$ and $Q_j$ forming a larger pipe of $P$ containing both $Q_i$ and $Q_j$.}
\label{fig:pipegraph}
\end{figure}

We need to argue that the algorithm outputs the correct value and that the different steps can be implemented to obtain the stated running times.

\subsection{Correctness of the domino packing algorithm}\label{sec:correctness}
We now refer the reader back to the description of our domino packing algorithm from Section~\ref{sec:algo} and show that it correctly finds the size of a maximum domino packing. 
To show this, it suffices to show that maximum matchings of $G_0$ and $G^*$, leave the same number of unmatched vertices. First note that $P_1$ has at most $n$ corners. Further, a polyomino with $n$ corners can have at most $(n-4)/4$ holes, and since we remove a hole and add at most $6$ new corners in going from $P_i'$ to $P_{i+1}'$,  $P_2$ has at most $5n/2<3n$ corners. It follows that also $Q=B(P_2,\lfloor 3n/2 \rfloor)$ has at most $3n$ corners. Letting $n_1$ denote the number of corners of $P_3=P\setminus Q$ it finally follows that $n_1\leq 4n$. Now $\dist(\partial P,Q)\geq \lfloor 3n/2 \rfloor\geq \lfloor \frac{3}{8}n_1 \rfloor$ and moreover, $Q$ has consistent parity and no holes, so Lemma~\ref{lemma:removingQ} applies, giving that $G_0$ has a maximum matching, which restricts to a perfect matching of $G(Q)$ and to a maximum matching of $G_3$. In particular, maximum matchings of $G_0$ and $G_1$ leave the same number of vertices unmatched.

Next, we argue that maximum matchings of $G_3$ and $G^*$ again leave the same number of vertices unmatched. It is easy to see that a maximum matching of $G^*$ can be extended to a matching of $G_3$ with the same number of unmatched vertices by simply inserting more horizontal dominos in the horizontal pipes and vertical dominos in the vertical pipes (here we use that the $S_j$'s as defined in Step 3, each consists of an even number of cells).
Conversely, let $M_1$ be a maximum matching of $G_3$. We show that $G^*$ has a matching, $M_2$, with the same number of uncovered cells. 
For this we consider the pipes $(T_i)_{i=1}^r$ found in step 4 of the algorithm. For each $1\leq i \leq r$, we let $T_i'$ be the pipe obtained from $T_i$ by shortening $T_i$ by one layer of cells in each end. The length of $T_i'$ is thus two shorter than that of $T_i$. Let further $L_i\supseteq T_i'$ consist of all cells of $P$ which are covered by a domino which cover at least one cell of $T_i'$. The sets $(L_i)_{i=1}^r$ are pairwise disjoint and they are each of the form of the set $L$ in Lemma~\ref{lemma:pipes} (up to a 90 degree rotation). Moreover, the maximum matching $M_1$ restricts to a maximum matchings of  $M_1'$ of $G(P_3\setminus \bigcup_{i=1}^r L_i)$ and a maximum matching $M_1
^{(i)}$ of $G(L_i)$ for $1\leq i \leq r$. For $1\leq i \leq r$, we let $G_3
^{(i)}=G(L_i)$ and $G_2^{(i)}$ be the corresponding subgraph of $G_2$. We define $M_2$ to be $M_1'$ combined with any maximum matchings of the $G_2^{(i)}$, $1\leq i \leq r$. By applying Lemma~\ref{lemma:pipes} to each $L_i$, it follows that the maximum matchings of $G_3^{(i)}$ and $G_2^{(i)}$ leave the same number of unmatched vertices. It thus follows that $M_2$ and $M_1$ leave the same number of unmatched vertices. This finishes the argument that the algorithm works corrrectly.


\subsection{Bounding the size of the reduced instance}\label{sec:instancesize}
In determining the running time of our algorithm, it is crucial to bound the size of the reduced instance $G^*$.
In this section we show that $G^*$ has $O(n^3)$ vertices.
As explained in the next section, we can then find a maximum matching of $G^*$ in $O(n^3 \,\text{polylog}\, n )$ time.

We start out by proving the following lemma.
\begin{lemma}\label{lem:distbound}
The polyomino $P_3$ contains no $63n \times 63 n$ square subpolyomino.
\end{lemma}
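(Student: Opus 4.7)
The plan is to prove the lemma by contradiction. Suppose $S$ is an axis-aligned $63n \times 63n$ square subpolyomino of $P_3$. I will exhibit a point $c \in S \cap Q$, which contradicts $S \subseteq P_3 = P \setminus Q$.

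First I would peel off a $2$-thick layer. Since $S \subseteq P$, every point $p$ of the eroded square $S_1 \mydef B(S,-2)$ satisfies $B(p,2) \subseteq S \subseteq P$, so the even-aligned $2\times 2$ cell containing $p$ lies entirely in $P$, which forces $p \in P_1$. Hence $S_1 \subseteq P_1$, and $S_1$ has side length $63n-4$. Moreover, because $S_1$ is a filled square contained in $P_1$, it contains no hole of $P_1$, so $\partial P_1$ does not meet the interior of $S_1$; therefore any portion of $\partial P_2$ inside $\operatorname{Int}(S_1)$ must be the boundary of one of the channels $C_1,\dots,C_h$ carved in Step~2. Recall that $h \leq \lfloor (n-4)/4 \rfloor$ and each $C_i$ is a width-$2$ rectangle or L-shape.

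Next, I would identify where the putative witness $c$ can live. Consider the further eroded square $S_2 \mydef B(S_1,-\lfloor 3n/2\rfloor)$, of side at least $60n-4$. For any $c\in S_2$, the $3n\times 3n$ box $B(c,\lfloor 3n/2\rfloor)$ sits inside $S_1 \subseteq P_1$, and so avoids $\partial P$ and $P\setminus P_1$ automatically. Consequently, $c\in Q$ iff $B(c,\lfloor 3n/2\rfloor)$ avoids every channel, i.e., iff $c$ is not in the ``bad'' set $\bigcup_i B(C_i,\lfloor 3n/2\rfloor) \cap S_2$. The goal is to show that this bad set does not cover all of $S_2$: any good $c$ would witness $c \in Q \cap S$, completing the proof.

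The main obstacle will be this last quantitative step. A naive area bound allots each channel a bad rectangle of area at most $(L_i+3n)(3n+2) = O(n^2)$ (where $L_i$ is the channel's length inside $S$), and summing over the $\le n/4$ channels gives only $O(n^3)$, which exceeds $|S_2|=O(n^2)$ for large $n$. To sharpen this, I would exploit that $P_2$ is hole-free with at most $3n$ corners and invoke Corollary~\ref{corollary:rectanglepartition} to decompose $P_2$ into fewer than $9n/4$ axis-aligned rectangles; the channels, being thin strips of width $2$ that pass between corners of $P_2$, are constrained by this decomposition (and by Lemma~\ref{lemma:numberofpipes} when they appear as non-nested pipes). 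The upshot I would aim for is that if every $c \in S_2$ were bad, the channels intersecting $S_1$ would force $P_2$ to have strictly more than $3n$ corners, a contradiction. This localisation of some rectangle of the decomposition inside $S$ with both sides exceeding $3n$ produces the required $3n\times 3n$ sub-square of $P_2$ centered in $S$, and hence the point $c \in Q \cap S$ that finishes the proof.
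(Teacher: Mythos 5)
Your setup is sound and matches the paper's opening moves: erode $S$ to land inside $P_1$, observe that $\partial P_1$ cannot meet the interior of the eroded square, so the only obstructions to membership in $Q$ are the carved channels, and correctly note that a naive area bound on $\bigcup_i B(C_i,\lfloor 3n/2\rfloor)$ gives $O(n^3)$ and therefore proves nothing. But the step you defer --- showing the channels cannot in fact cover the eroded square --- is the entire content of the lemma, and the contradiction you aim for cannot be reached. The number of corners of $P_2$ is bounded by roughly $5n/2$ \emph{unconditionally}: there are at most $(n-4)/4$ holes, each channel is a width-$2$ rectangle or \textsf{L}-shape contributing $O(1)$ corners, so no geometric arrangement of the channels, however pathological, forces $P_2$ to have more than $3n$ corners. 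Likewise a rectangle decomposition of $P_2$ or Lemma~\ref{lemma:numberofpipes} only counts objects; it does not forbid $\Theta(n)$ long channels running in parallel at spacing $\approx 3n$, which would happily cover a $60n\times 60n$ square while keeping every count within budget.

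What actually kills this configuration in the paper is not a counting argument but the \emph{minimality} of the channels: each channel is chosen as a shortest ($L_\infty$) connection from a hole to the current outer boundary, and once a channel is carved its sides become part of the outer boundary for all later iterations. The paper's Claim formalizes this: two channels bounding opposite sides of a $k\times\ell$ box must satisfy $k\le\ell+2$, because otherwise the later channel could have been rerouted into the earlier one more cheaply. Combined with the fact that each channel turns at most once, this forbids two channels from running parallel and close over a long stretch, and a short case analysis on the central $3n'\times 3n'$ subsquare then yields the contradiction. Without invoking the greedy/shortest choice of the channels from Step~2, your argument cannot be completed, so as written the proof has a genuine gap at its crucial step.
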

\begin{proof}
Let $n'=\lfloor 3n/2 \rfloor$. 
We show that $P_3$ contains no $41n'\times 41n'$ square as a subpolyomino and the desired result will follow. Suppose for contradiction that $S\subseteq P_3$ is such a subpolyomino.
Note that $Q$ consists of exactly those points of $P_1$ of distance at least $n'$ to all the channels of $C:=P_1\setminus P_2$ and to $\partial P_1$.
Thus any point $x\in P_3$ has distance at most $n'$ to $C$ or to $\partial P_1$. In particular $S$ contains a $39n' \times 39n'$ square subpolyomino, $S_1\subseteq P_1$, all points of which are of distance at least $n'$ to $\partial P_1$ and thus, of distance at most $n'$ to $C$. 

\begin{figure}[ht]
\centering
\includegraphics{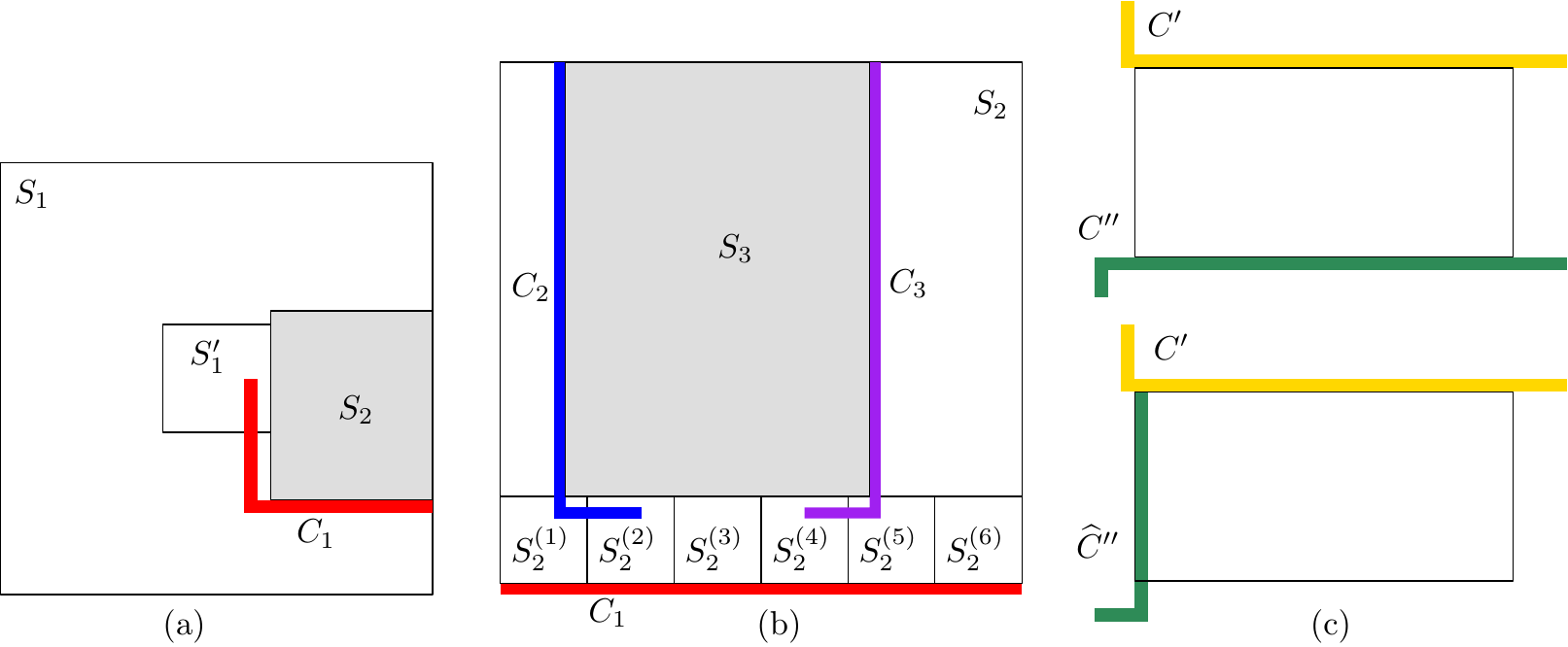}
\caption{Situations in the proof of Lemma~\ref{lem:distbound}.}
\label{fig:parallelchannels}
\end{figure}

By the way we chose the channels, each channel connects a hole of $P_1$ with either the boundary of $P_1$ or with a channel already carved in an earlier iteration.
Since $\partial P_1 \cap S_1 =\emptyset$, it follows that any channel intersecting $S_1$ has an end outside $S_1$ and thus leaves $S_1$ through an edge of $S_1$. 

Write $S_1'$ for the central $3n' \times 3n'$ square polyomino of $S_1$; see Figure~\ref{fig:parallelchannels} (a).
Since any point of $S_1$ is of distance at most $n'$ to $C$, $S_1'$ must intersect a channel $C_1\subset C$ (depicted in red in the figure).
We know that $C_1$ leaves $S_1$. 
It is simple to check that this leads to the existence of an $18n' \times 19n'$ rectangular polyomino $S_2\subseteq S_1$ having along one of its sides a straight part of the channel $C_1$ of length $18n'$; see Figure~\ref{fig:parallelchannels} (b).
Assume with no loss of generality that $S_2=[0,18n']\times [0,19n']$ and that the channel $C_1$ runs along the base of the rectangle $S_2$ as in the figure.
For $1\leq i \leq 6$ we define $S_2^{(i)}$ to be the square polyomino $[3(i-1)n',3in']\times [0,3n']$.
By the same reasoning as above, each of these squares must intersect the set of channels $C$ non-trivially.
As each channel turns at most once by construction, the squares $S_2^{(i)}$ are disjoint from $C_1$.
To finish the proof, we require the following claim.
\begin{claim}
Let $B=[0,k] \times [0,\ell]$, $k,\ell\in \N$ be a $k\times \ell$ square polyomino.
Suppose that $[0,k] \times [-1,0]$ is contained in some channel, $C'$, and that $[0,k] \times [\ell,\ell+1]$ is contained in some other channel, $C''$.
Then $k\leq \ell+2$.
\end{claim}
\begin{proof}[Proof of Claim.]
See Figure~\ref{fig:parallelchannels} (c).
Suppose without loss of generality that $C'$ was carved in iteration $i$ and $C''$ was carved in iteration $j$ in the process of generating $P_2$ in step 2 of the algorithm, and that $i<j$.
The channel $C''$ was chosen to connect a yet unconnected hole $H$ of $P_{j-1}'$ with the outer boundary of $P_{j-1}'$ along a shortest path in the $L_\infty$-norm.
At the time $C''$ was carved, $C'$ had already been carved and thus the edges of $C'$ (except the two ``ends'' of length $2$) is part of the outer face of $P_{j-1}'$.
Under the assumption $k> \ell+2$, we can find a shorter path connecting $H$ to $\partial P_{j-1}'$; see the bottom part of Figure~\ref{fig:parallelchannels} (c).
This shorter path shows that we could have picked a shorter channel $\widehat C''$ in place of $C''$.
This is a contradiction, so we conclude that $k\leq \ell+2$.
\end{proof}

Let us now finish the proof of the lemma.
We know that the two squares $S_2^{(2)}$ and $S_2^{(5)}$ each intersect channels of $C$.
Let us denote these not necessarily distinct channels respectively $C_2$ and $C_3$.
If these channels are the same, the channel $C_2$ passes straight through $S_2^{(3)}$.
But then the two channels $C_1$ and $C_2$ run in parallel for a length of at least $3n'+4$ and they have distance at most $3n'$ which gives a contradiction with the claim.
Thus $C_2$ and $C_3$ are different channels.
By the same reasoning as for $C_1$, the channel $C_2$ must leave $S_2$. 
If it does so in a direction parallel to $C_1$, we similarly obtain a contradiction with the claim.
Thus, it most leave $S_2$ in a direction perpendicular to $C_1$.
The same logic applies to $C_3$; see Figure~\ref{fig:parallelchannels} (b).
Now the two channels $C_2$ and $C_3$ provide a contradiction to the claim.
Indeed, their straight segments span a box, $S_3$, of dimensions $\ell \times  18n'$ where $\ell\leq 18n'-4$.
With this contradiction, we conclude that $P_3$ contains no $41n'\times 41n'$ square as a subpolyomino and the proof is complete.
\end{proof}
\begin{remark}
    No serious effort has been made to optimize the constants in Lemma~\ref{lem:distbound}.
\end{remark}

\begin{corollary}\label{cor:distbound}
For any $x\in P_3$, we have $dist(x,\partial P_3)\leq 32n$.
\end{corollary}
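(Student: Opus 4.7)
The plan is to deduce the corollary from Lemma \ref{lem:distbound} by a direct contradiction argument. Suppose there were some $x \in P_3$ with $\dist(x, \partial P_3) > 32n$. Since $P_3$ is closed and the $L_\infty$-distance from $x$ to the boundary exceeds $32n$, the closed $L_\infty$-ball $B(\{x\}, 32n)$ of radius $32n$ around $x$ is entirely contained in $P_3$. Geometrically, this ball is an axis-aligned closed square of side length $64n$ centered at $x$.

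The square $B(\{x\}, 32n)$ need not be aligned to the integer grid, so it is not itself a polyomino. However, writing $x = (x_0, y_0)$, I would choose an integer $a \in [x_0 - 32n,\, x_0 - 32n + n]$ and an integer $b \in [y_0 - 32n,\, y_0 - 32n + n]$, which is possible because each of these intervals has length $n \geq 1$. Then $[a, a+63n] \times [b, b+63n]$ is a $63n \times 63n$ square polyomino contained inside $B(\{x\}, 32n) \subseteq P_3$. This contradicts Lemma \ref{lem:distbound}, which asserts that $P_3$ contains no such subpolyomino.

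The only technical subtlety is ensuring the integer alignment step, i.e., that a sufficiently large integer-aligned subsquare can always be carved out of the $64n \times 64n$ continuous square. Because we give ourselves a slack of exactly $n$ in each coordinate direction and $n \geq 1$ (indeed $n \geq 4$ for any nontrivial polyomino), this is straightforward, so this corollary is essentially a one-line consequence of the lemma.
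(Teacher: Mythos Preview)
Your proof is correct and follows the same approach as the paper, which simply states ``If not, $P_3$ contains a $63n\times 63n$ square, a contradiction.'' You have merely made explicit the integer-alignment step that the paper leaves implicit.
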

\begin{proof}
If not, $P_3$ contains a $63n \times 63 n$ square, a contradiction.
\end{proof}
Corollary~\ref{cor:distbound} shows that each point of $P_3$ is of distance $O(n)$ to the boundary of $P_3$. In particular, this shows that the long pipes, $T_1,\dots,T_r$, found in step 4 of the algorithm all have width at most $O(n)$. By Lemma~\ref{lemma:numberofpipes}, $r=O(n)$, so when performing the shortening reduction in step 5 of our algorithm, the part of $G^*$ contained in contracted pipes 
gets size $O(n^3)$. We finish this section by showing that $P_3\setminus \bigcup_{i=1}^r T_i$ also consists of $O(n^3)$ cells. From this it will follow that $G^*$ is of order $O(n^3)$ which is what we require. We state the result as a lemma.
\begin{lemma}\label{lem:finalSize}
The reduced instance $G^*$ found by our algorithm has $O(n^3)$ vertices and edges.
\end{lemma}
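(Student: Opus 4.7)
The plan is to bound the vertex count of $G^*$ in two parts, corresponding to cells surviving inside the shortened long pipes and cells of $P_3$ outside any long pipe; the edge bound then follows, since each vertex of $G^*$ has degree at most $4$ (shortening replaces a deleted neighbor in $G_3$ with exactly one added long edge, so degrees do not grow).

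For the pipe contribution, I would use Corollary~\ref{cor:distbound} to argue that each maximal long pipe $T_i$ has width $k_i = O(n)$: a central cell of $T_i$ must lie within $L_\infty$-distance $32n$ of $\partial P_3$, while its distance to the two parallel boundary edges of $T_i$ is $k_i/2$. Step~5 then leaves only $O(k_i^2) = O(n^2)$ cells per pipe, consisting of two end-blocks of size $\Theta(k_i)\times k_i$. Combining with Lemma~\ref{lemma:numberofpipes}, which gives $r = O(n)$, the pipe contribution is $O(n^3)$.

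For the complement, the central claim is that every cell $c \in P_3 \setminus \bigcup_i T_i$ lies within $L_\infty$-distance $O(n)$ of some corner of $P_3$. Since $P_3$ has $O(n)$ corners, each with $O(n^2)$ cells in its $O(n)$-neighborhood, this yields the $O(n^3)$ bound on the complement. To prove the claim I take a nearest boundary point $p$ of $c$ with $\dist(c,p) \le 32n$ (Corollary~\ref{cor:distbound}) and let $e \subseteq \partial P_3$ be the edge containing $p$. If $p$ is within $O(n)$ of an endpoint of $e$, that endpoint is the desired corner. Otherwise $e$ is long; without loss of generality $e$ is horizontal at height $y_0$ with $P_3$ locally above. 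Define the column height $h(x) = \sup\{h \ge 0 : \{x\} \times [y_0, y_0+h] \subseteq P_3\}$ and set $H = h(x_c) \le 32n$. On the interval $I = [x_c - 3H,\, x_c + 3H]$, either $h \equiv H$---in which case $I \times [y_0, y_0 + H]$ is a long horizontal pipe of $P_3$ containing $c$, contradicting $c \notin \bigcup_i T_i$---or $h$ has a discontinuity at some $x' \in I$, and the lower endpoint of the corresponding vertical boundary segment of $\partial P_3$, at $(x',\, y_0 + \min(h(x'^-), h(x'^+)))$, is a corner of $P_3$ within $L_\infty$-distance $O(n)$ of $c$.

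The main obstacle is ensuring that a discontinuity of $h$ produces a corner at height at most $y_0 + H = O(n)$ above $e$. When the jump is upward to some value much larger than $H$, one uses the \emph{lower} endpoint of the corresponding vertical boundary edge, whose height is $y_0 + \min(h^-, h^+) \le y_0 + H$---this is a convex corner of $P_3$ if the minimum is $0$ (the polyomino terminates at that column, meeting $e$) and a concave corner otherwise. Once this technicality is handled, combining the pipe and complement contributions gives $|V(G^*)| = O(n^3)$, and hence $|E(G^*)| = O(n^3)$.
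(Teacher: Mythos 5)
Your overall decomposition matches the paper's: the surviving cells inside the shortened pipes are bounded by $O(n)$ pipes (Lemma~\ref{lemma:numberofpipes}) of width $O(n)$ (via Corollary~\ref{cor:distbound}) contributing $O(n^2)$ cells each, and the cells outside the pipes are bounded by showing each lies within $L_\infty$-distance $O(n)$ of one of $O(n)$ corners; the degree-$4$ observation for the edge count is also fine. For the second part, though, the paper argues differently: it passes to $R=P_3\setminus\bigcup_i T_i'$ (pipes shortened by one cell layer at each end, precisely so that the artificial pipe-end edges are not absorbed into longer edges of $\partial R$), and shows that a $6cn\times 6cn$ square around $x\in R$ containing no corner of $R$ forces $\partial R$ to cross it in parallel straight segments, a closest pair of which spans a long pipe of $P_3$ containing $x$ and not among $T_1,\dots,T_r$ --- contradicting maximality. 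Your column-height argument over the nearest long boundary edge is a genuinely different route, but as written it has a gap.

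The gap is the assertion $H=h(x_c)\le 32n$. Corollary~\ref{cor:distbound} bounds $\dist(x,\partial P_3)$, not the length of the free vertical segment above a fixed abscissa: if a narrow vertical corridor of width $O(n)$ and arbitrarily large height opens onto the long edge $e$ at $x_c$ (a T-junction), every point of the corridor is within $O(n)$ of its side walls, yet $h(x_c)$ equals the corridor's height. Since your window $I=[x_c-3H,x_c+3H]$ and your final bound $\dist(c,\text{corner})=O(H)$ are both calibrated by $H$, they become vacuous; moreover, even a discontinuity of $h$ at horizontal distance $O(n)$ from $x_c$ may have $\min(h^-,h^+)\gg n$ (a jump between two huge values), placing the resulting corner far above $c$. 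The claim can be repaired along your lines, but it requires the stronger Lemma~\ref{lem:distbound} rather than its corollary: the maximal interval $J\ni x_c$ on which $h\ge 63n$ must have length $<63n$ (else $P_3$ contains a $63n\times 63n$ square), so within horizontal distance $63n$ of $x_c$ there is a discontinuity at which $h$ drops below $63n$, and \emph{that} corner has height at most $y_0+63n$ and is within $O(n)$ of $c$; only after reducing to the regime $h=O(n)$ on a $\Theta(n)$-window does your dichotomy (constant $h$ gives a long pipe; a discontinuity gives a nearby low corner) go through. Alternatively, adopt the paper's empty-square argument, which handles both orientations at once and avoids the column-height function entirely.
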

\begin{proof}
For technical reasons to be made clear shortly we define $T_i'$ to be the pipe obtained from $T_i$ by shortening $T_i$ by a layer of cells in each end. The length of  $T_i'$ is thus exactly the length of $T_i$ minus 2. 
Let $R$ be the polyomino $P_3\setminus \bigcup_{i=1}^r T_i'$. Consider a (geometrical) edge, $e$, in the set $\partial R \setminus \partial P_3$ which is an edge forming an end of a shortened pipe $T_i'$.
It then follows that the endpoints of $e$ are corners of $R$ and in particular that $e$ is not contained in a longer edge of $\partial R$ (this would not necessarily be the case if we had not shortened the pipes a layer in each end when defining $R$). We will use this observation shortly.

As discussed, it suffices to show that $R$ has $O(n^3)$ cells.  Note that $R$ has $O(n)$ corners:
Indeed, $R$ is obtained from $P_3$, which has $O(n)$ corners, by removing $O(n)$ pipes, each of which adds only $4$ corners.
We show that any point  $x\in R$ is of distance $O(n)$ from a corner. Since each corner can have at most $O(n^2)$ cells within distance $O(n)$, this will show that $R$ has $O(n^3)$ cells. 

So let $x \in R$ be arbitrary. Also let $c\mydef 32$. By Corollary~\ref{cor:distbound}, any point of $P_3$ is of distance at most $cn$ to $\partial P_3$. It follows that, similarly, any point of $R$ is of distance at most $cn$ to $\partial R$. Let $S$ be the $6cn \times 6cn$ square centered at $x$ and suppose that $S$ contains no corner of $R$. Then $\partial R\cap S$ is a collection of horizontal and vertical straight line segments. Moreover, they are either all horizontal or all vertical as otherwise, they would intersect in a corner of $R$ inside $S$.
Assume without loss of generality that they are all horizontal.
Using that any point of $R$ is of distance at most $cn$ to $\partial R$, it follows that there exists two such parallel segments of distance at most $2cn$, one being above $x$ and one being below. Take a closest pair of such segments. Together they form a pipe, $T$, of $R$ of length $6cn$ and width at most $2cn$, i.e., a pipe of a length at least three times its width. Now $T$ is disjoint from the pipes $T_1,\dots, T_r$ (since $T\subset R$ and each $T'_i$ is disjoint from $R$). $T$ is a pipe of $R$ but we claim that it is in fact also a pipe of $P_3$. To see this, we note that by Corollary~\ref{cor:distbound}, the pipes found in step 4 of our algorithm have width at most $2cn$. In particular, the edges of $\partial R\setminus \partial P_3$ have length at most $2cn$ and we saw that they are not contained in longer edges of $\partial P_3$. However, the pipe $T$ has length $6cn$ and so, the long edges of $T$ are in fact edges of $\partial P_3$, so $T$ is a pipe of $P_3$.
This contradicts the maximality of the set of pipes $T_1,\dots,T_r$. We thus conclude that $S$ must contain a corner of $\partial R$. Since $x \in R$ was arbitrary, this shows that any point in $R$ is of distance at most $3cn=O(n)$ to a corner of $R$ and the proof is complete.
\end{proof}
\subsection{Implementation of the individual steps}\label{sec:runningtime}
We next describe how the different step of our domino tiling algorithm can be implemented.
\paragraph{\stepone}
We first compute the set $P_1\subset P$ with consistent parity.
To obtain $P_1$, we move all corners of $P$ to the interior of $P$ to the closest points with even coordinates as shown in Figure~\ref{figure:algsteps12}.

Moving the corners may cause some corridors of $P$ to collapse (namely the corridors of $P$ of thickness $1$), so that $P_1$ has overlapping edges corresponding to degenerate corridors.
The degenerate vertical corridors can be filtered out in $O(n\log n)$ time as follows (and the degenerate horizontal ones are handled analogously).
We sort the vertical edges after their $x$-coordinates and thus partition the edges into groups with identical $x$-coordinates.
For each group of vertical edges with the same $x$-coordinate, we sort them according to the $y$-coordinates of their lower endpoints.
Let $e_1,\ldots,e_k$ be one such sorted group.
We run through the edges $e_1,\ldots,e_k$ in this order.
For each edge $e_i$, we run through the succeeding edges until we get to an edge $e_j$ which is completely above $e_i$.
For each of the overlapping edges $e_k$, $k\in\{i+1,\ldots,j-1\}$, we remove the corresponding degenerate corridor of $P_1$ created by $e_i$ and $e_k$.
Since no triple of edges can be pairwise overlapping, there are $O(n)$ overlapping pairs in total, so this process is dominated by sorting, which takes $O(n\log n)$ time.

\paragraph{\steptwo}
Remember that we need to find a set of minimum size of $2\times 2$ squares $S_1,\ldots,S_k$ contained in $P'_i$ and with even coordinates that connects an edge of a hole to an edge of the outer boundary of $P'_i$.
For each pair of an edge of a hole of $P'_i$ and an edge of the outer boundary of $P'_i$, we can compute the size of the smallest set of squares connecting those two edges in $O(1)$ time, so by checking all pairs, we find the overall smallest set in $O(n^2)$ time.
Note that no edge of a middle square $S_j$, $2\leq j\leq k-1$, is contained in the boundary $\partial P'_i$, since otherwise, there would be a smaller set of squares with the desired properties.
Therefore, constructing $P'_{i+1}\mydef P'_i\setminus\bigcup_{j=1}^k S_j$ can then be done in $O(1)$ time once the squares $S_j$ have been found.
Since there are initially $O(n)$ holes to eliminate, the process takes $O(n^3)$ time in total.

\begin{figure}[ht]
\centering
\includegraphics[page=3]{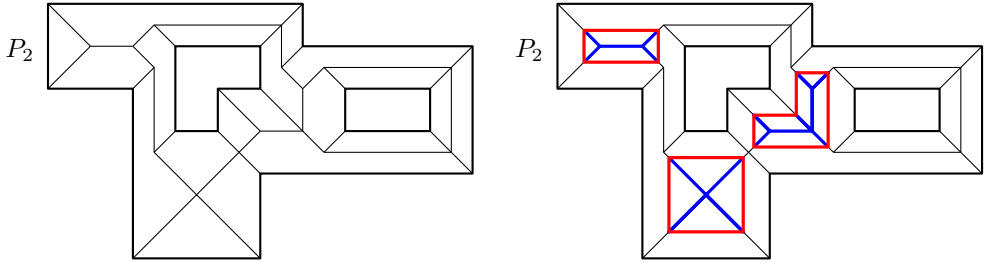}
\caption{Left: The $L_\infty$ Voronoi diagram $VD=VD(P_2)$ of the edges of a polyomino $P_2$.
Right: The blue parts of $VD$ are the subgraphs that have some sufficient distance $d$ to the boundary $\partial P_2$.
The red cycles enclose the regions of $P_2$ with at least distance $d$ to $\partial P_2$.}
\label{figure:voronoi}
\end{figure}

\paragraph{\stepthree}
We now explain how to compute the set $Q\subset P_2$ defined by $Q\mydef B(P_2,-\lceil 3n/4\rceil)$.
The boundary of $Q$ can be computed from the $L_\infty$ Voronoi diagram $VD\mydef VD(P_2)$ of the edges of $P_2$ by a well-known technique described by
Held, Luk{\'a}cs, \& Andor~\cite{held1994pocket}, as follows.
The Voronoi diagram $VD$ is a plane graph contained in $P_2$ that partitions $P_2$ into one region $R(e_i)$ for each edge $e_i$ of $P_2$, such that if $x\in R(e_i)$ then $\dist(x,e_i)=\dist(x,\partial P_2)$, and $VD$ consists of horizontal and vertical line segments and line segments that make $45^\circ$ angles with the $x$-axis; see Figure~\ref{figure:algstep5} (left).

We find all maximal subgraphs of $VD$ consisting of points with distance at least $\lceil3n/4 \rceil$ to $\partial P_2$.
The leafs of each subgraph $G$ lie on a cycle in $P_2$ where the distance to $\partial P_2$ is constantly $\lceil3n/4 \rceil$, and the cycles can be found by traversing the leafs of $G$ in clockwise order; see~\cite{held1994pocket} for the details and Figure~\ref{figure:algstep5} (right) for a demonstration.

We can compute $VD$ in $O(n\log n)$ time using the sweep-line algorithm of Papadopoulou \& Lee~\cite{papadopoulou2001voronoi}.
In our special case where all edges are horizontal or vertical, the algorithm becomes particularly simple as described by
Mart{\'\i}nez, Vigo, Pla-Garc{\'\i}a, \& Ayala~\cite{martinez2010skeleton}.
Once we have $VD$, it takes $O(n)$ time to compute $\partial Q$.

One can avoid the computation of $VD$ by offsetting the boundary into the interior by distance $1$ repeatedly $\lceil3n/4 \rceil$ times.
After each offset, we remove collapsed corridors as described in step 1.
This would take in total $O(n^2\log n)$ time.

The representation of $P_3\mydef P\setminus Q$ is obtained by simply adding the cycles representing the boundary of $Q$ to the representation of $P$.

\paragraph{\stepfour}
Recall that each long pipe $T_i\subset P_3$ is a maximal rectangle with a pair of edges contained in $\partial P_3$ which are at least $3$ times as long as the other pair of edges.

To find the pipes, we compute the $L_\infty$ Voronoi diagram $VD_1\mydef VD(P_3)$ of the edges of $P_3$ in $O(n\log n)$ time, as described in step 3.
Consider a long pipe $T_i$.
Assume without loss of generality that $T_i=[0,\ell]\times [0,k]$ where $\ell$ is the length and $k\leq \ell/3$ is the width.
We now observe that the segment $e\mydef [k/2,\ell-k/2]\times k/2$ in the horizontal symmetry axis of $T_i$ is contained in an edge of $VD_1$, since for any point $p$ in $e$, the edges of $P_3$ closest to $p$ are the horizontal edges of $T_i$.

Each horizontal or vertical edge $e$ of $VD_1$ separates the regions of points that are closest to a pair $s_1,s_2$ of horizontal or vertical edges of $P_3$.
It is easy to check whether $s_1,s_2$ define a long pipe containing (a part of) $e$.
Hence, all pipes can be identified by traversing the edges of $VD_1$.
As $VD_1$ has complexity $O(n)$, this step takes $O(n\log n)$ time in total.

\paragraph{\stepfive}
Recall that we define $G_3\mydef G(P_3)$ and obtain the final graph $G^*$ by replacing long horizontal (resp.~vertical) paths in pipes with long horizontal (resp.~vertical) edges.
Once $P_3$ and the long pipes have been computed, it is straightforward to construct $G^*$ in $O(n^3)$ time, since the size of $G^*$ is $O(n^3)$ by Lemma~\ref{lemma:rectanglepartition}.

\paragraph{\stepsix}
Recall that our algorithm outputs $|M|+(N_0-N_2)/2$, where $M$ is a maximum matching of $G^*$, $N_0$ is the area of $P$, and $N_2$ is the number of vertices of $G^*$.
In order to compute $M$, we use the multiple-source multiple-sink maximum flow algorithm by Borradaile, Klein, Mozes, Nussbaum, \& Wulff{-}Nilsen~\cite{borradaile2017multiple}.
In a directed plane graph with $m$ vertices and edge capacities, where a subset of the vertices are \emph{sources} and another subset are \emph{sinks}, the algorithm finds the maximum flow from the sources to the sinks respecting the capacities in time $O(m\log^3 m)$.
Gawrychowski \& Karczmarz~\cite{gawrychowski2018improved} described an improved algorithm with running time $O(m\frac{\log^3 m}{\log^2\log m})$.
Our graph $G^*$ is bipartite, so a maximum matching equals a maximum flow between the two vertex classes, when each edge has capacity $1$.
We can therefore find the maximum matching of $G^*$ in time $O(\domcomp)$.

\begin{remark}
We note that we can also find an (implicit) description of a maximum domino packing of $P$.
We simply extend the matching of $G^*$ by inserting horizontal dominos in the horizontal pipes and vertical dominos in the vertical pipes.
We further decide to give $Q$ any standard tiling, e.g., the one that uses only horizontal dominos.
It follows from the correctness of the algorithm that this gives a maximum domino packing of $P$.
\end{remark}

\subsection{Simpler but slower algorithm}\label{sec:simpler}
Using the structural results on domino packings, we are able to prove the correctness of the following much simpler algorithm, which works by truncating the long edges of $P$.
We sort the corners of $P$ by $x$-coordinates and consider the corners in this order $c_1,\ldots,c_n$.
When $x(c_{i+1})-x(c_i)>9n$, we move all the corners $c_{i+1},\ldots,c_n$ to the left by a distance of $2\lfloor\frac{x(c_{i+1})-x(c_i)}{2}\rfloor-6n$.
We call this operation a \emph{contraction}.
The result after all of the contractions is a polyomino $P'$ with the parities of the $x$-coordinates unchanged and with the difference between the $x$-coordinates of any two consecutive corners at most $6n$.
We then consider the corners in order according to $y$-coordinates and do a similar truncation of the long vertical edges.
We have now reduced the container $P$ to an orthogonal polygon $P''$ of area at most $O(n^4)$, since the span of the $x$-coordinates is $O(n^2)$, as is the span of the $y$-coordinates.
We then compute a maximum matching $M$ in the graph $G(P'')$ and return $|M|+\frac{\area(P)-\area(P'')}{2}$ as the size of a maximum packing in $P$.
Using the multiple-source multiple-sink maximum flow algorithm to compute the matching $M$, this leads to an algorithm with running time $O(n^4\, \text{polylog}\, n)$.

\begin{figure}
\centering
\includegraphics[page=7]{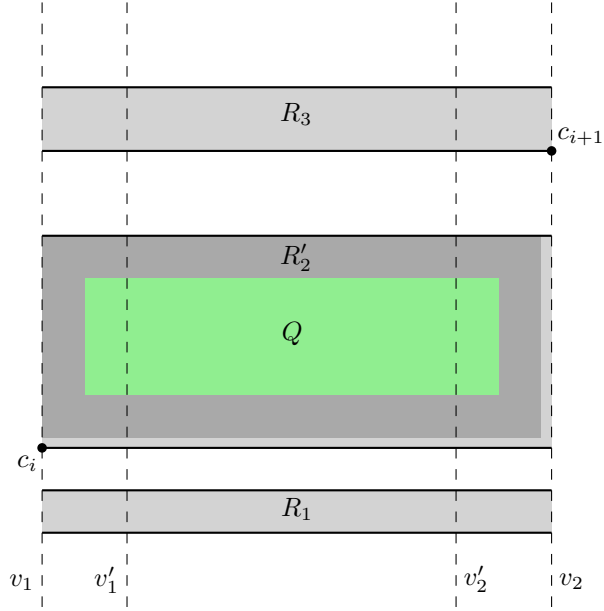}
\caption{A contraction of the simple algorithm with one fat and two skinny rectangles.
The algorithm moves all corners $c_{i+1},\ldots,c_n$ to the left, essentially contracting the area between the vertical lines $v_1'$ and $v_2'$ to nothing.}
\label{figure:simplealg}
\end{figure}

We now verify that the number of uncovered cells in maximum packings is invariant under a single contraction, and the correctness of the algorithm hence follows.
To this end, suppose that $x(c_{i+1})-x(c_i)>9n$, so that we move the corners $c_{i+1},\ldots,c_n$ to the left; see Figure~\ref{figure:simplealg}.
Let $v_1$ and $v_2$ be vertical lines with $x$-coordinates $x(c_i)$ and $x(c_{i+1})$, respectively, and let $V$ be the vertical strip bounded by $v_1$ and $v_2$.
The intersection $P\cap V$ is a collection of disjoint rectangles $R_1,\ldots,R_k$ of width $x(c_{i+1})-x(c_i)$ and various heights.
We define a rectangle $R_i$ to be \emph{fat} if its height is more than $3n$, and otherwise $R_i$ is \emph{skinny}.
We now define a polyomino $P_0$ in order to apply Lemma~\ref{lemma:removingQ}.
For each fat rectangle $R_i$, we let $R'_i\subseteq R_i$ be the maximum rectangle with even coordinates and add $R'_i$ to $P_0$.
As each rectangle $R_i$ corresponds to exactly two horizontal edges, the number of rectangles $k$ is upper bounded by $n/4$ and in particular, the number of corners of $P\setminus P_0$ is at most $2n$.
Letting $Q\mydef B(P_0,-\lfloor 3n/2\rfloor)$, we get from Lemma~\ref{lemma:removingQ} that there exists a maximum tiling of $P$ that restricted to $Q$ is a tiling.

We define $P_1\mydef P\setminus Q$ and observe that the contraction corresponds to contracting a set of long pipes in $P_1$.
These pipes are the skinny rectangles $R_i$ and the parts of the fat rectangles vertically above and below the removed part $Q$.
We therefore get from Lemma~\ref{lemma:pipes} that the number of uncovered cells is invariant under the contraction.

\begin{figure}
\centering
\includegraphics[page=6]{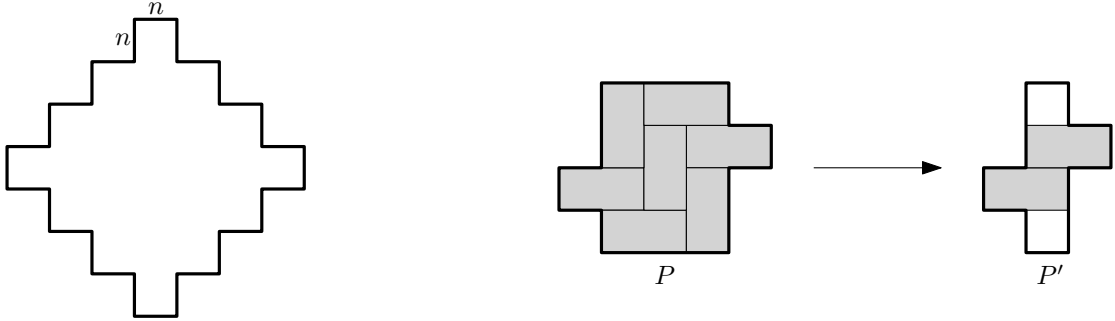}
\caption{Left: A polyomino with area $\Omega(n^4)$ that the simple algorithm will not reduce.
Right: If we truncate edges so that consecutive $x$-coordinates have difference either $1$ or $2$ (keeping the parities invariant), then there may be more uncovered cells in a maximum packing of the reduced instance than in the original.}
\label{figure:counterex}
\end{figure}

For some containers $P$, the graph $G(P'')$ really has $\Omega(n^4)$ vertices, so the algorithm is slower than the complicated algorithm.
For instance when the boundary of $P$ consists of four ``staircases'', each consisting of $n/4$ vertices, where each step has width and height $n$; see Figure~\ref{figure:counterex} (left).
Here the complicated algorithm will remove most of the interior, leaving a layer of cells of thickness $O(n)$ around the boundary, but the simple algorithm will not make any contractions.

One might be tempted to think that we can even truncate the edges so that the difference between consecutive $x$- and $y$-coordinates is either $1$ or $2$, keeping the parity of all coordinates.
However, this does not work, as seen in Figure~\ref{figure:counterex} (right).
Two dominos can be packed in the reduced container $P'$, and the reduction decreases the area by eigth cells, so the formula would give that the original container $P$ has room for six dominos, but there is actually room for seven.

\bibliographystyle{plain}
\bibliography{tiling}

\begin{thebibliography}{10}

\bibitem{beauquier1991translating}
Dani{\`{e}}le Beauquier and Maurice Nivat.
\newblock On translating one polyomino to tile the plane.
\newblock {\em Discrete \& Computational Geometry}, 6:575--592, 1991.

\bibitem{BEAUQUIER19951}
Danièle Beauquier, Maurice Nivat, Eric Remila, and Mike Robson.
\newblock Tiling figures of the plane with two bars.
\newblock {\em Computational Geometry}, 5(1):1--25, 1995.

\bibitem{bergerUndecidability}
Robert Berger.
\newblock The undecidability of the domino problem.
\newblock {\em Memoirs of the American Mathematical Society}, 1(66), 1966.

\bibitem{BERMAN1990153}
Fran Berman, David Johnson, Tom Leighton, Peter~W. Shor, and Larry Snyder.
\newblock Generalized planar matching.
\newblock {\em Journal of Algorithms}, 11(2):153--184, 1990.

\bibitem{berman1982optimal}
Francine Berman, Frank~Thomson Leighton, and Lawrence Snyder.
\newblock Optimal tile salvage, 1982.
\newblock Technical report, Purdue University, Department of Computer Sciences,
  \url{https://docs.lib.purdue.edu/cgi/viewcontent.cgi?article=1321&context=cstech}.

\bibitem{borradaile2017multiple}
Glencora Borradaile, Philip~N. Klein, Shay Mozes, Yahav Nussbaum, and Christian
  Wulff{-}Nilsen.
\newblock Multiple-source multiple-sink maximum flow in directed planar graphs
  in near-linear time.
\newblock {\em {SIAM} Journal on Computing}, 46(4):1280--1303, 2017.

\bibitem{chien2001cutting}
Chen-Fu Chien, Shao-Chung Hsu, and Jing-Feng Deng.
\newblock A cutting algorithm for optimizing the wafer exposure pattern.
\newblock {\em IEEE Transactions on Semiconductor Manufacturing},
  14(2):157--162, 2001.

\bibitem{CONWAY1990183}
J.H Conway and J.C Lagarias.
\newblock Tiling with polyominoes and combinatorial group theory.
\newblock {\em Journal of Combinatorial Theory, Series A}, 53(2):183 -- 208,
  1990.

\bibitem{de2005investigation}
Dirk~K. de~Vries.
\newblock Investigation of gross die per wafer formulas.
\newblock {\em IEEE Transactions on Semiconductor Manufacturing},
  18(1):136--139, 2005.

\bibitem{el2009packing}
Dania El-Khechen, Muriel Dulieu, John Iacono, and Nikolaj Van~Omme.
\newblock Packing $2\times 2$ unit squares into grid polygons is {NP}-complete.
\newblock In {\em Proceedings of the 21st Canadian Conference on Computational
  Geometry (CCCG 2009)}, pages 33--36, 2009.

\bibitem{fowler1981optimal}
Robert~J. Fowler, Michael~S. Paterson, and Steven~L. Tanimoto.
\newblock Optimal packing and covering in the plane are {NP}-complete.
\newblock {\em Information processing letters}, 12(3):133--137, 1981.

\bibitem{gamow1958puzzle}
George Gamow and Marvin Stern.
\newblock {\em Puzzle-math}.
\newblock Macmillan, 1958.

\bibitem{gawrychowski2018improved}
Pawel Gawrychowski and Adam Karczmarz.
\newblock Improved bounds for shortest paths in dense distance graphs.
\newblock In {\em 45th International Colloquium on Automata, Languages, and
  Programming ({ICALP} 2018)}, pages 61:1--61:15, 2018.

\bibitem{doi:10.1080/00029890.1954.11988548}
S.~W. Golomb.
\newblock Checker boards and polyominoes.
\newblock {\em The American Mathematical Monthly}, 61(10):675--682, 1954.

\bibitem{held1994pocket}
Martin Held, G{\'a}bor Luk{\'a}cs, and L{\'a}szl{\'o} Andor.
\newblock Pocket machining based on contour-parallel tool paths generated by
  means of proximity maps.
\newblock {\em Computer-Aided Design}, 26(3):189--203, 1994.

\bibitem{hochbaum1985approximation}
Dorit~S. Hochbaum and Wolfgang Maass.
\newblock Approximation schemes for covering and packing problems in image
  processing and {VLSI}.
\newblock {\em Journal of the ACM (JACM)}, 32(1):130--136, 1985.

\bibitem{horiyama2012packing}
Takashi Horiyama, Takehiro Ito, Keita Nakatsuka, Akira Suzuki, and Ryuhei
  Uehara.
\newblock Packing trominoes is {NP}-complete, {{\#}P}-complete and
  {ASP}-complete.
\newblock In {\em 24th Canadian Conference on Computational Geometry ({CCCG}
  2012)}, pages 211--216, 2012.

\bibitem{JangWafer}
S.~{Jang}, J.~{Kim}, T.~{Kim}, H.~{Lee}, and S.~{Ko}.
\newblock A wafer map yield prediction based on machine learning for
  productivity enhancement.
\newblock {\em IEEE Transactions on Semiconductor Manufacturing},
  32(4):400--407, 2019.

\bibitem{kenyontiling}
C.~{Kenyon} and R.~{Kenyon}.
\newblock Tiling a polygon with rectangles.
\newblock In {\em Proceedings of the 33rd Annual Symposium on Foundations of
  Computer Science (FOCS 1992)}, pages 610--619, 1992.

\bibitem{martinez2010skeleton}
J.~Mart{\'\i}nez, M.~Vigo, N.~Pla-Garc{\'\i}a, and D.~Ayala.
\newblock Skeleton computation of an image using a geometric approach.
\newblock In {\em 31st Eurographics (EG 2010)}, 2010.

\bibitem{melzner2007maximization}
Hanno Melzner and Alexander Olbrich.
\newblock Maximization of good chips per wafer by optimization of memory
  redundancy.
\newblock {\em IEEE Transactions on Semiconductor Manufacturing}, 20(2):68--76,
  2007.

\bibitem{pak2016fast}
Igor Pak, Adam Sheffer, and Martin Tassy.
\newblock Fast domino tileability.
\newblock {\em Discrete \& Computational Geometry}, 56(2):377--394, 2016.

\bibitem{PAK20131804}
Igor Pak and Jed Yang.
\newblock Tiling simply connected regions with rectangles.
\newblock {\em Journal of Combinatorial Theory, Series A}, 120(7):1804 -- 1816,
  2013.

\bibitem{papadopoulou2001voronoi}
Evanthia Papadopoulou and D.T. Lee.
\newblock The {$L_\infty$} {V}oronoi diagram of segments and {VLSI}
  applications.
\newblock {\em International Journal of Computational Geometry \&
  Applications}, 11(05):503--528, 2001.

\bibitem{remilatiling}
Eric R\'{e}mila.
\newblock Tiling a polygon with two kinds of rectangles.
\newblock {\em Discrete Comput. Geom.}, 34(2):313--330, 2005.

\bibitem{thurston1990conway}
William~P. Thurston.
\newblock Conway's tiling groups.
\newblock {\em The American Mathematical Monthly}, 97(8):757--773, 1990.

\bibitem{WIJSHOFF19841}
H.A.G. Wijshoff and J.~{van Leeuwen}.
\newblock Arbitrary versus periodic storage schemes and tessellations of the
  plane using one type of polyomino.
\newblock {\em Information and Control}, 62(1):1--25, 1984.

\end{thebibliography}

\end{document}